 \let\MYoriglatexcaption\caption
 \renewcommand{\caption}[2][\relax]{\MYoriglatexcaption[#2]{#2}}
\theoremstyle{plain}
\newtheorem{theorem}{Theorem}
\newtheorem{lemma}{Lemma}
\newtheorem*{conjecture*}{Conjecture}
\theoremstyle{definition}
\newtheorem{remark}{Remark}
\newcommand{\ind}[1]{\mathbbm{1}_{\{#1\}}}   
\newcommand{\Exp}{{\mathsf{E}}}
\newcommand{\Pro}{{\mathsf{P}}}
\newcommand{\F}{\mathscr{F}}
\newcommand{\hth}{\hat{\theta}}
\newcommand{\hvth}{\hat{\vartheta}}
\newcommand{\C}{\mathscr{C}}
\newcommand{\T}{{T_{\rm o}}}
\newcommand{\J}{{\mathsf{J}}}
\newcommand{\V}{\mathscr{V}}
\begin{document}
\title{Optimal Stopping for Interval Estimation in Bernoulli Trials}

\author{Tony~Yaacoub,
        George~V.~Moustakides,~\IEEEmembership{Senior Member,~IEEE} 
        and Yajun~Mei
\thanks{T. Yaacoub and Y. Mei are with the H. Milton Stewart School of Industrial and Systems Engineering, Georgia Institute of Technology, Atlanta,
GA, USA. E-mail: (tyaacoub, ymei3)@gatech.edu.
Website: http://www2.isye.gatech.edu/\raisebox{0.1em}{{\tiny$\sim$}}ymei/.}
\thanks{G.\,V. Moustakides is with the Department of Computer Science, Rutgers University, Piscataway, NJ, USA and the Electrical and Computer Engineering Department, University of Patras, Rion, Greece. E-mail: george.moustakides@rutgers.edu, moustaki@upatras.gr.
Website: https://www.cs.rutgers.edu/\raisebox{0.1em}{{\tiny$\sim$}}gm463/.}
\thanks{Manuscript received ~~; revised ~~.}}

\maketitle

\begin{abstract}
We propose an optimal sequential methodology for obtaining confidence intervals for a binomial proportion $\theta$. Assuming that an i.i.d. random sequence of Benoulli($\theta$) trials is observed sequentially, we are interested in designing a)~a stopping time $T$ that will decide when is the best time to stop sampling the process, and b)~an optimum estimator $\hth_T$ that will provide the optimum center of the interval estimate of $\theta$. We follow a semi-Bayesian approach, where we assume that there exists a prior distribution for $\theta$, and our goal is to minimize the average number of samples while we guarantee a minimal coverage probability level. The solution is obtained by applying standard optimal stopping theory and computing the optimum pair $(T,\hth_T)$ numerically. Regarding the optimum stopping time component $T$, we demonstrate that it enjoys certain very uncommon characteristics not encountered in solutions of other classical optimal stopping problems. Finally, we compare our method with the optimum fixed-sample-size procedure but also with existing alternative sequential schemes.
\end{abstract}

\begin{IEEEkeywords}
Sequential estimation, confidence intervals, binomial proportion, optimal stopping, sequential analysis.
\end{IEEEkeywords}

\IEEEpeerreviewmaketitle

\section{Introduction}

\IEEEPARstart{I}{nterval} estimation of a binomial proportion $\theta$ is one of the most basic problems in statistics with many important real-world applications. Some classical applications include interval estimation of the prevalence of a rare disease \cite{sullivan13};  interval estimation of the overall response rate in clinical trials \cite{abramson13}; and accuracy assessment in remote sensing \cite{mk98}. In these applications, the sample size is fixed in advance, and a \textit{confidence interval} for $\theta$ is obtained. There exists extensive bibliography regarding derivations of confidence intervals for $\theta$ when the sample size is fixed. Perhaps, the most widely known in this category is Wald's interval, which takes the form $\hat{\theta}_\text{T}\pm z_{\frac{\alpha}{2}}\sqrt{\frac{\hat{\theta}_\text{T}(1-\hat{\theta}_\text{T})}{\text{T}}}$, where $\text{T}$ is the fixed sample size, $1-\alpha$ expresses the desired coverage probability, $\hat{\theta}_\text{T}$ is the sample mean of $\theta$
and $z_{\frac{\alpha}{2}}$ satisfies $Q(z_{\frac{\alpha}{2}}) =\frac{\alpha}{2}$ with $Q(x)$ denoting the complementary cdf of a standard $N(0,1)$ Gaussian random variable. This confidence interval is derived based on
the asymptotic normality of $\hat{\theta}_\text{T}$ and, therefore, exhibits poor behavior when $\text{T}\theta(1-\theta)$ is small \cite{ac98, bcd01,n98, v93}.
Several efforts to improve Wald's classical method are reported in \cite{w27,cp34,s54,c56,bs83,ac98}. 
There are also Baysian-based techniques \cite{bcd01, bcd02, r03} while in
\cite{ac98, bcd01, n98, pa08, v93} there exists interesting surveys that evaluate the relative performance of the above methods. Finally we must mention that \cite{t14} provides explicit formulas for the required sample size that can guarantee a prescribed coverage probability.

In many modern applications, sampling observations is costly and time consuming. Therefore, there is a desire to limit the sampling size without, however, compromising the quality of the interval estimate. For instance, in automatic fraud detection in finance, one needs to manually go through the ``suspect'' financial transactions that are automatically detected as fraudulent by some machine learning or other computer algorithm. Since the manual process is expensive in labor and cost, it is desirable to quickly estimate, with high confidence, what percentage of the suspect transactions are truly fraudulent. A different motivating application is in Statistical Model Checking, where with an approximate verification method, one overcomes the state space explosion problem for probabilistic systems by Monte Carlo simulations. Given an executable stochastic system, we verify a system's property with simulation and we desire to estimate the probability $\theta$ by which the system satisfies the property in question. The goal is to estimate $\theta$ within acceptable margins of error and confidence (see \cite{Jegourel17} and references therein). Because Monte Carlo simulations very often tend to require extensive time and computing power, it is advantageous to reduce their number assuring, at the same time, satisfactory quality levels for the corresponding estimate. The sequential version of the interval estimation aims exactly at reducing the sample size by selecting it to be random and, in particular, a stopping time controlled by the observations themselves. The literature focusing on the sequential setup of the problem is limited compared to its fixed sample-size counterpart (see \cite{cr65,t61,f10}). However, none of these articles is able to claim optimality of their corresponding schemes in any sense.

The objective of our current work is to offer optimum \textit{sequential} methods for interval estimation of $\theta$, with the quality of the estimate expressed through the \textit{coverage probability}. In addition to deriving the optimum scheme, we will also demonstrate some very uncommon but highly interesting properties of the optimum solution. These properties are not encountered in optimum sequential schemes derived for other well known sequential problems (i.e.~sequential hypothesis testing). We must also add that our methodology exhibits similarities with the work developed in \cite{c77}. However, the focus in \cite{c77} is on the actual estimate of $\theta$ with the adopted criterion being a variation of the classical mean square error.
In our work, as we pointed out, we focus on confidence intervals and coverage probabilities; and, as it turns out, this difference makes our derivations and proofs far more complicated, requiring original analytical methodology. This becomes particularly apparent when we attempt to establish the validity of the unique properties, mentioned before, that characterize our optimum solution.

The remainder of this article is organized as follows. In Section\,II we discuss our proposed framework for interval estimation for $\theta$ and propose a well-defined optimization problem and discuss its general solution. In Section\,III we focus on the computational aspects of the optimum scheme and the unique properties that they characterize it. In Section\,IV we compare the proposed scheme against the fixed-sample-size and two existing sequential methods in the literature. Finally, Section\,V contains our conclusions.

\section{Proposed Framework}

We observe sequentially an i.i.d. process $X_{1},X_2,\ldots$ of Bernoulli random variables with $X_t\in\{0,1\}$ and $\Pro(X_t = 1) = \theta = 1 - \Pro(X_t = 0), \theta\in[0,1]$. The goal is to provide a confidence interval for $\theta$. We are interested in confidence intervals of fixed width equal to $2h$ for some pre-specified $h\in(0,\frac{1}{2})$. We would also like our scheme to be able to guarantee a coverage probability equal to $1-\alpha$, where $\alpha\in(0,1)$ is given. Our scheme consists of a pair $(T,\hat{\theta}_T)$, that is, a stopping time $T$ and a \textit{mid-point estimator}\footnote{The estimate $\hat{\theta}_T$ does not have the meaning of a classical parameter estimator. It is the mid-point of the confidence interval $[\hat{\theta}_T-h,\hat{\theta}_T+h]$ and does not necessarily constitute an efficient estimate of $\theta$.} $\hat{\theta}_T$, where $T$ is adapted to the observation history (filtration generated by the observations) and $\hat{\theta}_T$ is a function of the observations accumulated up to the time of stopping $T$. We would like to solve the following constrained optimization problem for the optimum pair 
\begin{equation}
\inf_{T,\hat{\theta}_T}\Exp[T|\theta],~\text{subject to:}~\Pro(|\hat{\theta}_T-\theta|>h|\theta)\leq\alpha,
\label{eq:first_criterion}
\end{equation}
where the desired interval estimate is $[\hat{\theta}_T-h,\hat{\theta}_T+h]$ (with the two ends cropped at 0 and 1, respectively, whenever they exceed the two limits) and where $\Pro(\cdot|\theta)$ and $\Exp[\cdot|\theta]$ denote probability and expectation for \textit{given} $\theta$.

Although \eqref{eq:first_criterion} seems as the ideal formulation, it unfortunately targets an infeasible goal. We note that we are asking for the pair $(T,\hat{\theta}_T)$ to minimize the average number of samples \textit{for every value} of the parameter $\theta$. In other words, we want our scheme to enjoy a \textit{uniform} optimality property over all $\theta$, a requirement which is impossible to satisfy. In order to be able to find a solution that has a well-defined form of optimality, we adopt a semi-Bayesian approach\footnote{The term ``semi-Bayesian'' is used because our setup involves two different components where one is optimized while the other is constrained, unlike full-Bayesian approaches that combine all terms into a single performance measure.} and assume that a prior $\pi(\theta)$ for $\theta$ is available. This allows for the following modification of the previous constrained optimization
\begin{equation}
\inf_{T,\hat{\theta}_T}\Exp[T],~\text{subject to:}~\Pro(|\hat{\theta}_T-\theta|>h)\leq\alpha
\label{eq:second_criterion}
\end{equation}
where $\Pro(\cdot)$ and $\Exp[\cdot]$ denote probability and expectation including \textit{averaging over $\theta$} with the help of the prior.

\begin{remark} We must emphasize that the constraint in \eqref{eq:second_criterion} does not guarantee that the desired coverage probability will also hold for each individual $\theta$, namely $\Pro(|\hat{\theta}_T-\theta|>h|\theta)\leq\alpha$, a property which is particularly desirable in practice. Perhaps, a more meaningful problem to consider in place of \eqref{eq:second_criterion} would have been
\begin{equation}
\inf_{T,\hat{\theta}_T}\Exp[T],~\text{subject to:}~\sup_\theta\Pro(|\hat{\theta}_T-\theta|>h|\theta)\leq\alpha,
\label{eq:second_criterion*}
\end{equation}
that assures a coverage probability of at least $1-\alpha$ for \textit{every} $\theta$. Unfortunately, it is unclear how to derive the optimal solution to this alternative formulation. Consequently, we focus on \eqref{eq:second_criterion} as the optimum scheme we are going to develop, but in our numerical examples, we will evaluate it in terms of \eqref{eq:second_criterion*} as well.
\end{remark}

Let $c>0$ denote a Lagrange multiplier that we use to combine the two terms in \eqref{eq:second_criterion} into a single cost function $J(T, \hat \theta_{T})= c  \Exp[T] + \Pro(|\hat{\theta}_T-\theta|>h),$ and consider the \textit{unconstrained} optimization problem
\begin{equation}
\inf_{T,\hat{\theta}_T}J(T, \hat \theta_{T})
=\inf_{T,\hat{\theta}_T}\left\{c  \Exp[T] + \Pro(|\hat{\theta}_T-\theta|>h)\right\}.
\label{eq:third_criterion}
\end{equation}
We will first identify the solution to \eqref{eq:third_criterion} and then demonstrate that a proper selection of $c$ can also solve the constrained problem in \eqref{eq:second_criterion}.

\subsection{The Unconstrained Problem}

We start by considering the classical Bayes estimation problem for fixed sample size $t$
\begin{equation}
\inf_{\hat{\theta}_t}\Pro(|\hat{\theta}_t-\theta|>h).
\label{eq:Bayes}
\end{equation}
If we observe $\F_t=\sigma\{X_1,\ldots,X_t\}$ then, given that $\{X_t\}$ is i.i.d.~Bernoulli($\theta$), the probability to obtain a specific combination of samples given $\theta$ is equal to $\theta^{S_t}(1-\theta)^{t-S_t},$ where $S_t=\sum_{k=1}^tX_k$ is the number of ``successes'' up to time $t$.  This implies that the posterior probability density of $\theta$ given the observations can be written as
\begin{equation}
\pi_t(\theta|\F_t)=\pi_t(\theta|S_t)=\frac{\theta^{S_t}(1-\theta)^{t-S_t}\pi(\theta)}{\int_0^1\theta^{S_t}(1-\theta)^{t-S_t}\pi(\theta)\,d\theta}.
\label{eq:posterior}
\end{equation}
From Bayesian estimation theory \cite[Page 142]{p88}, we have that the optimization in \eqref{eq:Bayes} is achieved by the following Bayes estimator
\begin{equation}
\hat{\vartheta}_t(S_t)=\text{arg}\inf_{\hat{\theta}_t}\Pro(|\hat{\theta}_t-\theta|>h|\F_t)=\text{arg}\sup_{\hat{\theta}_t}\int_{\max\{\hat{\theta}_t-h,0\}}^{\min\{\hat{\theta}_t+h,1\}}\pi_t(\theta|S_t)\,d\theta,
\label{eq:1}
\end{equation}
yielding the corresponding optimum conditional complementary coverage probability 
\begin{align}
\begin{split}
\C_t(S_t)&=\inf_{\hat{\theta}_t}\Pro(|\hat{\theta}_t-\theta|>h|\F_t)=1-\sup_{\hat{\theta}_t}\int_{\max\{\hat{\theta}_t-h,0\}}^{\min\{\hat{\theta}_t+h,1\}}\pi_t(\theta|S_t)\,d\theta\\
&=1-\int_{\max\{\hat{\vartheta}_t(S_t)-h,0\}}^{\min\{\hat{\vartheta}_t(S_t)+h,1\}}\pi_t(\theta|S_t)\,d\theta.
\end{split}
\label{eq:2}
\end{align}
From \eqref{eq:1} and \eqref{eq:2} we observe that both quantities $\hat{\vartheta}_t(S_t),\C_t(S_t)$ are $\F_t$-measurable and, more precisely, functions of $S_t$. For known prior $\pi(\theta)$, we can, at least numerically, compute the Bayes estimate and the corresponding optimum conditional complementary coverage probability for each combination of integer pair $(t,S_t)$. 
\begin{remark}
By focusing on \eqref{eq:1}, we can make a small but interesting observation: Regarding the Bayes estimate $\hat{\vartheta}_t(S_t)$ it is easy to verify that
\begin{equation}
h\leq\hat{\vartheta}_t(S_t)\leq1-h.
\end{equation}
Indeed, this is clear, because if in \eqref{eq:1} we select $\hat{\theta}_t<h$ or $\hat{\theta}_t>1-h$, this will yield an inferior cost compared to the selection $\hat{\theta}_t=h$ or $\hat{\theta}_t=1-h$, respectively. The implication of this observation is that $\hat{\vartheta}_t(S_t)$ will be \textit{biased and inconsistent} when considered as an estimate of the true parameter $\theta$, at least for values of $\theta$ outside the interval $[h,1-h]$. As we mentioned, the correct meaning of this quantity is that it constitutes the \textit{mid-point} of the confidence interval $[\hat{\vartheta}_t(S_t)-h,\hat{\vartheta}_t(S_t)+h]$ with the latter enjoying, for each fixed $t$, the largest possible coverage probability.
\end{remark}

Consider now the optimization in \eqref{eq:third_criterion} which will be performed in two steps: First we fix the stopping time $T$ and minimize $J(T,\hat{\theta}_T)$ with respect to $\hat{\theta}_T$; the resulting expression is then minimized, during the second step, over $T$ in order to obtain the optimum pair. We have the following lemma that addresses the first problem.

\begin{lemma}\label{lem:1}
Assume stopping time $T$ is fixed and satisfies $T\leq N$, where $N>0$ is some deterministic integer. Then,
\begin{equation}
J(T,\hat{\theta}_T)=c\Exp[T] + \Pro(|\hat{\theta}_T-\theta|>h)\geq
\Exp[cT+\C_{T}]=\J(T),
\label{eq:step1}
\end{equation}
with equality when we apply the corresponding Bayesian estimator $\hat{\theta}_T=\hat{\vartheta}_T$ at the time of stopping.
\end{lemma}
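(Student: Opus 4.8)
The plan is to notice that the term $c\Exp[T]$ appears identically on both sides of \eqref{eq:step1}, so it suffices to establish the inequality $\Pro(|\hth_T-\theta|>h)\geq\Exp[\C_T]$ for the complementary coverage probability, with equality achieved at $\hth_T=\hvth_T$. The essential idea is to descend from the unconditional probability to a conditional one through the filtration $\F_t$, and then invoke the pointwise optimality of the Bayes estimator already established in \eqref{eq:1}--\eqref{eq:2}.

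First I would use the hypothesis $T\leq N$ to partition the sample space according to the value of $T$, writing
\begin{equation}
\Pro(|\hth_T-\theta|>h)=\sum_{t=0}^{N}\Pro(|\hth_t-\theta|>h,\,T=t),
\end{equation}
which is a genuine finite sum precisely because $T$ is bounded; this is where the boundedness assumption does its work, guaranteeing that all expectations involved are finite and that no interchange-of-limit justification is needed. On the event $\{T=t\}$ the estimator coincides with its time-$t$ version $\hth_t$, and crucially $\{T=t\}$ is $\F_t$-measurable because $T$ is a stopping time. Hence I would apply the tower property and pull the $\F_t$-measurable indicator outside the inner conditional expectation:
\begin{equation}
\Pro(|\hth_t-\theta|>h,\,T=t)=\Exp\!\big[\ind{T=t}\,\Pro(|\hth_t-\theta|>h\mid\F_t)\big].
\end{equation}

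The next step is to lower-bound the inner conditional probability. Since $\hth_t$ is $\F_t$-measurable, the very definition of $\C_t(S_t)$ in \eqref{eq:2} as the infimum of $\Pro(|\hth_t-\theta|>h\mid\F_t)$ over all $\F_t$-measurable candidates gives $\Pro(|\hth_t-\theta|>h\mid\F_t)\geq\C_t(S_t)$. Substituting this bound and summing over $t$, the finite collection of $\F_t$-measurable indicators reassembles into $\Exp[\sum_t\ind{T=t}\C_t(S_t)]=\Exp[\C_T]$, yielding $\Pro(|\hth_T-\theta|>h)\geq\Exp[\C_T]$; adding $c\Exp[T]$ then produces \eqref{eq:step1}. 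For the equality claim, the Bayes estimator $\hvth_t(S_t)$ attains the infimum in \eqref{eq:2} for every value of $S_t$, so the choice $\hth_t=\hvth_t$ turns each conditional inequality into an equality and propagates equality through the entire chain.

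I do not expect a serious obstacle here; the argument is essentially iterated expectation combined with the fixed-sample Bayes optimality. The only points requiring care are the measurability bookkeeping---ensuring that $\{T=t\}\in\F_t$ allows the indicator to be treated as a constant inside the conditional expectation---and confirming that the pointwise conditional optimality of $\hvth_t$ from \eqref{eq:1}--\eqref{eq:2} is exactly the ingredient needed at the stopped index, both of which are secured by the boundedness $T\leq N$ that reduces everything to a finite decomposition.
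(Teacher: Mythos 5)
Your proposal is correct and follows essentially the same route as the paper's own proof: decompose over the events $\{T=t\}$ (a finite sum since $T\leq N$), condition on $\F_t$ using the tower property and the $\F_t$-measurability of $\ind{T=t}$, lower-bound the conditional probability by its infimum $\C_t$, and obtain equality by choosing the Bayes estimator $\hvth_t$ on each event. The only cosmetic difference is that you cancel the common term $c\Exp[T]$ at the outset, whereas the paper carries $ct$ through the same decomposition.
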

\begin{proof}
The proof is straightforward and presented in the Appendix.
\end{proof}
A side-product of Lemma\,\ref{lem:1}, as it can be verified from the corresponding proof in the Appendix, is the fact that the Bayesian estimator is not only optimum for fixed sample size, but it retains its optimality property when the sample size is controlled by any stopping time $T$ adapted to the observations.

Using \eqref{eq:step1} from Lemma\,\ref{lem:1}, we are now left with the optimization of the stopping time $T$. Assuming that $N$ is an integer which is sufficiently large, we consider the following optimization over stopping times that are bounded by $N$
\begin{equation}
\inf_{0\leq T\leq N}\J(T)=\inf_{0\leq T\leq N}\Exp[cT+\C_T].
\label{eq:step2}
\end{equation}
This is a classical \textit{finite horizon} optimal stopping problem with cost per sample equal to $c$ and cost for stopping at $t$ equal to $\C_t$. Of course, it is only natural to wonder why we limited our analysis to finite horizons instead of considering the more classical \textit{infinite horizon} version. As we will see in the sequel, for the most common prior we will be able to demonstrate that the infinite horizon assumption is completely unnecessary. Indeed, the optimum stopping time will turn out to be bounded by a deterministic quantity, suggesting that by limiting ourselves to a (sufficiently large) finite horizon, we do not suffer any performance loss. 

In order to solve the optimization problem defined in \eqref{eq:step2}, we follow the classical optimal stopping theory \cite{Shiryaev}. For $t=0,1,\ldots,N$ define the sequence of optimal \textit{average residual costs}
\begin{equation}
\V_t=\inf_{t\leq T\leq N}\Exp[c(T-t)+\C_{T}|\F_t],
\end{equation}
then we have
\begin{equation}
\V_t=\min\{\C_t,c+\Exp[\V_{t+1}|\F_t]\},~t=N,\ldots,1,0,
\label{eq:backwards}
\end{equation}
with the backward recursion initialized with $\V_{N+1}=1$. Regarding this last selection, it produces $\V_N=\C_N$ since the latter is a probability. In fact, this is exactly what the optimum residual cost at $N$ must be, because if we have not stopped before $N$, then we necessarily stop at $N$ and this produces cost $\C_N$ (simply the cost of stopping at $N$). The total optimum cost is expressed through $\V_0$, namely $\V_0=\inf_{0\leq T\leq N}\J(T)$. The next lemma specifies in more detail the recursion in \eqref{eq:backwards}.

\begin{lemma}\label{lem:2}
Consider the recursion in \eqref{eq:backwards} then, the optimal residual cost $\V_t, t=N,\ldots,0$ is a function $\V_t(S_t)$ of $S_t$ and therefore $\F_t$-measurable. Furthermore, \eqref{eq:backwards} can be written as
\begin{equation}
\V_t(S_t)=\min\{\C_t(S_t),c+\tilde{\V}_t(S_t)\},~t=N,\ldots,0,
\label{eq:backwards2}
\end{equation}
where $\tilde{\V}_t(S_t)$ expresses the optimum average residual cost to continue, satisfying
\begin{align}
\tilde{\V}_t(S_t)&=g_{t+1}(S_t)\V_{t+1}(S_t+1)+\big(1-g_{t+1}(S_t)\big)\V_{t+1}(S_t),\label{eq:barv}\\
g_{t+1}(S_t)&=\Pro(X_{t+1}=1|\F_t)=\frac{\int_0^1\theta^{S_t+1}(1-\theta)^{t-S_t}\pi(\theta)\,d\theta}{\int_0^1\theta^{S_t}(1-\theta)^{t-S_t}\pi(\theta)\,d\theta}.
\label{eq:gt}
\end{align}
Finally, if the prior $\pi(\theta)$ is symmetric around $\frac{1}{2}$ then the functions $\C_t(S_t),\V_t(S_t),\tilde{\V}_t(S_t)$ are symmetric with respect to $S_t$ around the value $\frac{t}{2}$.
\end{lemma}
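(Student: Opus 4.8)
The plan is to prove the measurability and recursion claims by one backward induction on $t$, and then to obtain the symmetry claim from a second induction built on a single reflection identity for the posterior \eqref{eq:posterior}. The base case of the first induction is immediate: $\V_{N+1}=1$ is constant and $\V_N=\C_N$ is a function of $S_N$ by \eqref{eq:2}. For the inductive step I would assume $\V_{t+1}=\V_{t+1}(S_{t+1})$ and observe that, conditioned on $\F_t$, the only randomness in $S_{t+1}=S_t+X_{t+1}$ is the Bernoulli variable $X_{t+1}$, which equals $1$ with probability $g_{t+1}(S_t)=\Pro(X_{t+1}=1|\F_t)$ and $0$ otherwise. Hence $\Exp[\V_{t+1}|\F_t]=g_{t+1}(S_t)\V_{t+1}(S_t+1)+\big(1-g_{t+1}(S_t)\big)\V_{t+1}(S_t)=\tilde{\V}_t(S_t)$, which is a function of $S_t$; substituting into \eqref{eq:backwards} gives \eqref{eq:backwards2} and shows $\V_t$ is $\F_t$-measurable. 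The formula \eqref{eq:gt} for $g_{t+1}$ then follows from the tower property together with \eqref{eq:posterior}, since $g_{t+1}(S_t)=\Exp[\theta|\F_t]=\int_0^1\theta\,\pi_t(\theta|S_t)\,d\theta$.

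For the symmetry part I would first record the reflection identity $\pi_t(1-\theta|S_t)=\pi_t(\theta|t-S_t)$, obtained by the change of variable $\theta\mapsto1-\theta$ in \eqref{eq:posterior} and the hypothesis $\pi(1-\theta)=\pi(\theta)$ (the two normalizing integrals agree after the same substitution). This identity has two consequences I would use repeatedly. First, applying $\theta\mapsto1-\theta$ inside the integral in \eqref{eq:2} and simultaneously replacing the candidate midpoint $\hat{\theta}_t$ by $1-\hat{\theta}_t$ maps the cropped interval $[\max\{\hat{\theta}_t-h,0\},\min\{\hat{\theta}_t+h,1\}]$ onto the analogous cropped interval for $1-\hat{\theta}_t$, so the supremum defining $\C_t(t-S_t)$ equals that defining $\C_t(S_t)$; hence $\C_t(t-S_t)=\C_t(S_t)$. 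Second, the same reflection yields $g_{t+1}(t-S_t)=1-g_{t+1}(S_t)$, equivalently $\Exp[\theta|t-S_t]=1-\Exp[\theta|S_t]$.

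I would then propagate the symmetry of $\V_t$ around $t/2$ by backward induction, the base case $\V_N=\C_N$ being covered above. Assuming $\V_{t+1}(S_{t+1})=\V_{t+1}\big((t+1)-S_{t+1}\big)$, I evaluate \eqref{eq:barv} at $t-S_t$, substitute $g_{t+1}(t-S_t)=1-g_{t+1}(S_t)$, and use the inductive hypothesis in the form $\V_{t+1}(t-S_t+1)=\V_{t+1}(S_t)$ and $\V_{t+1}(t-S_t)=\V_{t+1}(S_t+1)$; the two terms recombine exactly to $\tilde{\V}_t(S_t)$, giving $\tilde{\V}_t(t-S_t)=\tilde{\V}_t(S_t)$. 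Taking the minimum in \eqref{eq:backwards2} against the already-symmetric $\C_t$ then yields $\V_t(t-S_t)=\V_t(S_t)$, which closes the induction.

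The routine parts are the measurability induction and the derivation of \eqref{eq:gt}. The step needing the most care is the symmetry induction for $\tilde{\V}_t$: one must correctly pair the reflection $S_t\mapsto t-S_t$ at level $t$ with the reflection $S_{t+1}\mapsto(t+1)-S_{t+1}$ at level $t+1$, so that the success increment $S_t+1$ and the complementary weight $1-g_{t+1}(S_t)$ align, and one must also verify that the endpoint cropping in \eqref{eq:2} is preserved under $\hat{\theta}_t\mapsto1-\hat{\theta}_t$. This index bookkeeping is where an error would most easily slip in.
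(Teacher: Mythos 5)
Your proposal is correct: the backward induction for measurability and the recursion \eqref{eq:backwards2}--\eqref{eq:gt}, together with the reflection identity $\pi_t(1-\theta|S_t)=\pi_t(\theta|t-S_t)$ driving the symmetry induction, is exactly the argument the paper has in mind (the paper itself omits all details, stating only that the lemma ``can be easily established using induction''). Your write-up supplies precisely those details, including the two points that genuinely require care --- the index pairing $\V_{t+1}(t-S_t)=\V_{t+1}(S_t+1)$, $\V_{t+1}(t-S_t+1)=\V_{t+1}(S_t)$ and the preservation of the endpoint cropping under $\hat{\theta}_t\mapsto 1-\hat{\theta}_t$ --- so nothing further is needed.
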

\begin{proof} 
The validity of this lemma is straightforward and can be easily established using induction. We therefore give no further details.
\end{proof}
Once the sequence of optimal residual costs has been obtained through the solution of \eqref{eq:backwards2}, it is then immediate to define the optimum stopping time $\T$ that solves the minimization problem in \eqref{eq:step2}. Again, optimal stopping theory \cite{Shiryaev} suggests that
\begin{equation}
\T=\inf\{0\leq t\leq N: \V_t(S_t)=\C_t(S_t)\}
=\inf\{0\leq t\leq N: \C_t(S_t)\leq c+\tilde{\V}_t(S_t)\}.
\label{eq:opt-stop}
\end{equation}
In other words, when the optimum residual cost $\V_t(S_t)$ matches, for the first time, the cost for stopping $\C_t(S_t)$ or, equivalently, the cost of stopping is smaller than the residual cost of continuing, 
this is when we stop. Since the functions involved depend on $S_t$, this quantity can serve as our test statistic and we can express the stopping rule in \eqref{eq:opt-stop} in terms of $S_t$. Specifically, for each time $t$, we can find the sampling region $\Omega_t=\{0\leq S_t\leq t: \V_t(S_t)<\C_t(S_t)\}=\{0\leq S_t\leq t: c+\tilde{\V}_t(S_t)<\C_t(S_t)\}$ with $\Omega_N=\varnothing$, and we can equivalently define the stopping time as $\T=\inf\{0\leq t\leq N: S_t\not\in\Omega_t\}$. 

\subsection{The Constrained Problem}
Let us now turn to the constrained problem in \eqref{eq:second_criterion} which we can solve with the results we have so far. We will show that \eqref{eq:second_criterion} can be recovered as an instance of the unconstrained version \eqref{eq:third_criterion} corresponding to a special selection of the Lagrange multiplier $c$. Our result is summarized in the following theorem.

\begin{theorem}\label{th:1} 
For the solution of \eqref{eq:second_criterion} we distinguish two cases:

\noindent i)~If $\alpha\geq\C_0=\Pro(|\hat{\vartheta}_0-\theta|>h)$, with $\hat{\vartheta}_0={\rm arg}\inf_{\hat{\theta}_0}\Pro(|\hat{\theta}_0-\theta|> h)$, then the optimum is to stop without taking any samples, i.e. $\T=0$ and use as mid-point of the optimum confidence interval the value $\hat{\vartheta}_0$ which is based only on the prior $\pi(\theta)$.

\noindent ii)~If $\Pro(|\hat{\vartheta}_0-\theta|> h)>\alpha$, then for any horizon $N\geq N_{\alpha}$ where $N_\alpha$ satisfies $\Pro(|\hat{\vartheta}_{N_\alpha}-\theta|> h)<\alpha$, there exists Lagrange multiplier $c_*$ such that the solution of \eqref{eq:third_criterion} is also the solution to \eqref{eq:second_criterion} that can involve a possible randomization before taking any samples.
\end{theorem}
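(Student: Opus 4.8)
The plan is to read \eqref{eq:third_criterion} as a Lagrangian relaxation of \eqref{eq:second_criterion} and to exploit the monotone dependence of the unconstrained optimum on the multiplier $c$. For each $c>0$ let $T^{c}$ denote the optimal stopping time delivered by \eqref{eq:opt-stop}, paired (via Lemma~\ref{lem:1}) with the Bayes estimator, so that the attained coverage complement is $P(c):=\Exp[\C_{T^{c}}]$ and the attained expected sample size is $E(c):=\Exp[T^{c}]$. Case~i) is then immediate: if $\alpha\geq\C_0$, the zero-sample scheme $\T=0$ with mid-point $\hvth_0$ is feasible, and since $\Exp[T]\geq 0$ with equality only at $T=0$, nothing feasible can beat it.

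For Case~ii), the first step is the standard monotonicity of the relaxed optimum. Writing the optimality inequalities of $T^{c_1}$ and $T^{c_2}$ for $c_1<c_2$ and adding them yields $(c_2-c_1)\big(E(c_1)-E(c_2)\big)\geq 0$, so $E(\cdot)$ is non-increasing and, substituting back, $P(\cdot)$ is non-decreasing in $c$. The second step pins down the two ends of this curve. As $c\to\infty$ sampling is prohibitively costly, so $T^{c}=0$ and $P(c)=\C_0>\alpha$. As $c\to 0^{+}$ the coverage term dominates and the optimizer minimizes $\Exp[\C_T]$ over $T\leq N$; here I would use that $\{\C_t\}$ is an $\F_t$-supermartingale, i.e. $\Exp[\C_{t+1}\,|\,\F_t]\leq\C_t$, because any $\F_t$-measurable mid-point remains admissible at time $t+1$ so conditioning on the extra sample can only lower the attainable conditional coverage complement. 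Optional stopping for supermartingales then gives $\min_{T\leq N}\Exp[\C_T]=\Exp[\C_N]\leq\Exp[\C_{N_\alpha}]=\Pro(|\hvth_{N_\alpha}-\theta|>h)<\alpha$ whenever $N\geq N_\alpha$, so that $P(0^{+})<\alpha$.

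Since there are finitely many stopping rules on the horizon, $P(\cdot)$ is a non-decreasing step function that starts below $\alpha$ and ends above it, hence crosses the level $\alpha$ at some value $c_*$. At $c=c_*$ the comparison in \eqref{eq:opt-stop} is an equality at the relevant reachable states, so both tie-resolutions are optimal for the $c_*$-Lagrangian and attain the common minimal cost $\V_0$: the ``continue'' resolution $T_-$ (sampling more, smaller coverage complement) with $\Exp[\C_{T_-}]\leq\alpha$, and the ``stop'' resolution $T_+$ (sampling less, larger coverage complement) with $\Exp[\C_{T_+}]\geq\alpha$. Flipping a single coin before any sample is drawn---following $T_-$ with probability $p$ and $T_+$ with probability $1-p$---produces, by linearity of $\Exp[T]$ and $\Exp[\C_T]$ in the mixture, a scheme $T_*$ that still attains $\V_0$ and whose coverage complement equals exactly $\alpha$ for the appropriate $p$. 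The duality closure then finishes the argument: for any feasible $(T,\hth_T)$ with $\Pro(|\hth_T-\theta|>h)\leq\alpha$ and $T\leq N$,
\begin{equation*}
c_*\Exp[T]+\alpha\;\geq\;c_*\Exp[T]+\Pro(|\hth_T-\theta|>h)\;=\;J(T,\hth_T)\;\geq\;\V_0\;=\;c_*\Exp[T_*]+\alpha,
\end{equation*}
whence $\Exp[T]\geq\Exp[T_*]$, i.e. the randomized $c_*$-scheme solves \eqref{eq:second_criterion}.

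I expect the main obstacle to be the discreteness of the problem: because $T$ is integer-valued and the horizon finite, $P(\cdot)$ is piecewise constant and generically no single deterministic $c$-scheme meets the constraint with equality, which is exactly why the randomization is forced. Care is also needed to verify that the two tie-resolutions at $c_*$ are simultaneously optimal and that their coverage complements genuinely bracket $\alpha$, so that the convex combination is both admissible as a $c_*$-Lagrangian optimum and feasible for \eqref{eq:second_criterion}.
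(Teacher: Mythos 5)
Your proposal is correct and follows the same overall Lagrangian-relaxation strategy as the paper (fix $c$, solve the unconstrained problem, establish monotonicity in $c$, pin down the two endpoints, randomize at the crossing, invoke duality), but several key steps are executed by genuinely different means. For monotonicity, you add the two optimality inequalities at $c_1<c_2$ (the exchange argument) to get that $E(c)$ is non-increasing and $P(c)=\Exp[\C_{T^c}]$ is non-decreasing; the paper instead proves that the value functions $\V_t(S_t,c)$ are increasing in $c$, deduces that the stopping time $\T(c)$ is \emph{pathwise} decreasing in $c$, and then combines this with the telescoping identity $\Exp[\C_{\T(c)}]=\C_0-\Exp\big[\sum_{t=0}^{N}\{\C_t-\Exp[\C_{t+1}|\F_t]\}\ind{\T(c)>t}\big]$ and the supermartingale property of $\{\C_t\}$. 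Your route avoids the pathwise claim entirely and is more elementary; the paper's route yields the stronger structural fact that the sampling regions shrink as $c$ grows. At the $c\to0$ end both arguments use the same supermartingale/optional-stopping fact $\min_{T\leq N}\Exp[\C_T]=\Exp[\C_N]$. Where you are more careful than the paper: you acknowledge from the outset that $P(\cdot)$ is generically a step function (so randomization is the rule, not the exception, whereas the paper first treats a hypothetical continuous case), and you write out the duality closure $c_*\Exp[T]+\alpha\geq J(T,\hth_T)\geq\V_0=c_*\Exp[T_*]+\alpha$ that the paper leaves implicit in the phrase ``which means that $\T(c_*)$ solves the constrained problem.'' The concern you flag about the two tie-resolutions being simultaneously optimal at $c_*$ is real but closable by a standard argument: there are finitely many stopping rules on the finite horizon, each Lagrangian cost $c\mapsto c\Exp[T]+\Exp[\C_T]$ is affine in $c$, so the optimal value is continuous in $c$ and any rule optimal along a sequence $c_n\uparrow c_*$ (respectively $c_n\downarrow c_*$) remains optimal at $c_*$ itself; their coverage complements then bracket $\alpha$ by the definition of the crossing point.

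One genuine omission relative to the paper: you treat the existence of $N_\alpha$ with $\Pro(|\hvth_{N_\alpha}-\theta|>h)<\alpha$ as given, but without it part ii) is vacuous. The paper establishes it explicitly via Chebyshev, using the suboptimal estimator $S_t/t$: $\Pro(|\hvth_t-\theta|>h)\leq\Pro\big(\big|\tfrac{S_t}{t}-\theta\big|>h\big)\leq\tfrac{1}{h^2}\Exp\big[\tfrac{\theta(1-\theta)}{t}\big]\leq\tfrac{1}{4h^2t}\to0$. This is a two-line fix, but it belongs in a complete proof of the statement.
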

\begin{proof}
The proof of this theorem is presented in the Appendix.
\end{proof}

\section{Properties of the Optimum Solution}

If we fix the value $N$ of the horizon and the cost per sample $c$, we can then compute the mid-points $\{\{\hat{\vartheta}_t(S_t)\}_{S_t=0}^t\}_{t=0}^N$ of the confidence intervals from \eqref{eq:1}. Assuming that $\pi(\theta)$ is continuous, candidates for $\hat{\vartheta}_t(S_t)$ can be obtained from the solution of the following equation which we obtain by differentiating \eqref{eq:1} with respect to $\hth_t$
\begin{equation}
(\hat{\theta}_t+h)^{S_t}(1-\hat{\theta}_t-h)^{t-S_t}\pi(\hat{\theta}_t+h)-(\hat{\theta}_t-h)^{S_t}(1-\hat{\theta}_t+h)^{t-S_t}\pi(\hat{\theta}_t-h)=0.
\label{eq:root}
\end{equation}
The previous equation has clearly a solution in the interval $[h,1-h]$ when $0<S_t<t$ with the corresponding value providing a (local) extremum for the coverage probability. To these candidate mid-points we must include the two end points $h,1-h$ since the global maximum can occur at the two ends as well. Therefore, we need to examine which of these cases provides the best coverage probability and select the corresponding value as our optimum mid-point $\hat{\vartheta}_t(S_t)$. When $S_t=0,t$ it is possible \eqref{eq:root} not to have any solution in $[h,1-h]$. In this case, $\hat{\vartheta}_t(0)$ and $\hat{\vartheta}_t(t)$ are equal to one of the two end values $h$ or $1-h$. Having identified the optimum mid-points $\{\{\hat{\vartheta}_t(S_t)\}_{S_t=0}^t\}_{t=0}^N$, we apply \eqref{eq:2} to compute the corresponding optimum complementary conditional coverage probabilities $\{\{\C_t(S_t)\}_{S_t=0}^t\}_{t=0}^N$.

The next step consists in computing $\{\{g_{t+1}(S_t)\}_{S_t=0}^t\}_{t=0}^N$ for $t=0,\ldots,N$ and $S_t=0,\ldots,t$ with numerical integration. Once we have available 
$\{\{\C_t(S_t)\}_{S_t=0}^t\}_{t=0}^N$ and $\{\{g_{t+1}(S_t)\}_{S_t=0}^t\}_{t=0}^N$, we can then use them in the backward recursion \eqref{eq:backwards2} to find the sequence $\{\{\tilde{\V}_t(S_t)\}_{S_t=0}^t\}_{t=0}^N$ and the optimum residual cost sequence $\{\{\V_t(S_t)\}_{S_t=0}^t\}_{t=0}^N$.
To identify the stopping rule, according to \eqref{eq:opt-stop} we must compare the two sequences $\{\{\C_t(S_t)\}_{S_t=0}^t\}_{t=0}^N$, $\{\{\V_t(S_t)\}_{S_t=0}^t\}_{t=0}^N$ element-by-element. At coordinates $(t,S_t)$ where the sequences differ, we decide to continue sampling; whereas if they are equal, we decide to stop. This generates the sequence of sampling regions $\{\Omega_t\}_{t=0}^N$. Equivalently, we can compare $\{\{\C_t(S_t)\}_{S_t=0}^t\}_{t=0}^N$ with $\{\{c+\tilde{\V}_t(S_t)\}_{S_t=0}^t\}_{t=0}^N$, and wherever the first is no larger than the second, we stop, while we continue sampling in the opposite case.

We now present a conjecture that contains two significant claims for the optimum stopping time for the problem in \eqref{eq:third_criterion} which we believe are valid for \textit{any} prior $\pi(\theta)$. We were able to provide a proof for the first claim (Lemma\,\ref{lem:2a}) for a rich class of priors, and prove both claims (Theorem\,\ref{th:2}) providing also quantitative information when the prior is the Beta density. Regarding the latter case we should note that the Beta density is among the most popular priors for the problem we are considering in this work.

\begin{conjecture*}
For any prior $\pi(\theta)$ and sufficiently large horizon $N$ the optimum stopping time $\T$ of the unconstrained problem in \eqref{eq:third_criterion} enjoys the following two properties:\\
i).~There exists constant $t_{\rm up}$ depending only on $c$ and not on $N$ such that $\T\leq t_{\rm up}$.\\
ii).~For sufficiently small $c$ there exists constant $t_{\rm lo}\geq1$ depending only on $c$ and not on $N$ such that $t_{\rm lo}\leq\T$.
\end{conjecture*}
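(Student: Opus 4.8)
The plan is to read the stopping rule \eqref{eq:opt-stop} as a comparison between the stopping cost $\C_t(S_t)$ and the continuation cost $c+\tilde{\V}_t(S_t)$, and to sandwich the latter between two quantities that are easy to control. For the upper bound I will use $\tilde{\V}_t(S_t)\ge 0$; for the lower bound I will use $\tilde{\V}_t(S_t)=\Exp[\V_{t+1}|\F_t]\le\Exp[\C_{t+1}|\F_t]$, which follows from $\V_{t+1}\le\C_{t+1}$. Both bounds are horizon-free, and this is ultimately what makes $t_{\rm up}$ and $t_{\rm lo}$ independent of $N$.

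For part~(i), observe that every residual cost is a complementary probability, hence nonnegative, so $\tilde{\V}_t(S_t)\ge 0$ and the continuation cost satisfies $c+\tilde{\V}_t(S_t)\ge c$. Consequently, at any $(t,S_t)$ with $\C_t(S_t)\le c$ we necessarily stop. It therefore suffices to exhibit a deterministic $t_{\rm up}$, depending only on $c$ (and the fixed $h$), with $\sup_{0\le S\le t}\C_t(S)\le c$ for all $t\ge t_{\rm up}$: reaching such a $t_{\rm up}\le N$ forces stopping along \emph{every} trajectory, i.e.\ $\T\le t_{\rm up}$. To bound $\C_t(S)$ I will use not the optimal mid-point but the projection of the posterior mean $\mu_t=\Exp[\theta|\F_t]$ onto $[h,1-h]$; centering the width-$2h$ interval there and applying Chebyshev's inequality (the boundary constraint and the cropping at $0,1$ only lowering the miss probability) gives
\[
\C_t(S)\;\le\;\frac{\mathrm{Var}(\theta\,|\,\F_t)}{h^{2}}.
\]
The remaining ingredient is that the posterior variance tends to $0$ \emph{uniformly} in $S$. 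For the Beta$(a,b)$ prior the posterior is Beta$(a+S,\,b+t-S)$ with $\mathrm{Var}(\theta|\F_t)\le \tfrac{1}{4(a+b+t+1)}$, so one may take $t_{\rm up}=\big\lceil \tfrac{1}{4ch^{2}}-a-b-1\big\rceil$; for the rich class of priors of Lemma~\ref{lem:2a} the same $O(1/t)$ uniform control of the posterior variance is all that is needed.

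For part~(ii), the sufficient condition for \emph{not} stopping at $(t,S_t)$ is $\C_t(S_t)>c+\tilde{\V}_t(S_t)$, and via $\tilde{\V}_t(S_t)\le\Exp[\C_{t+1}|\F_t]$ it is implied by the horizon-free inequality $\delta_t(S_t)>c$, where
\[
\delta_t(S_t)=\C_t(S_t)-\Exp[\C_{t+1}|\F_t]=\C_t(S_t)-g_{t+1}(S_t)\,\C_{t+1}(S_t+1)-\big(1-g_{t+1}(S_t)\big)\C_{t+1}(S_t)
\]
is the expected one-step reduction of the optimal coverage error. I will first establish $\delta_t\ge0$ (information never hurts): the time-$t$ optimal mid-point $\hvth_t$ is $\F_{t+1}$-measurable, so $\C_{t+1}\le\Pro(|\hvth_t-\theta|>h\,|\,\F_{t+1})$, and averaging over $\F_{t+1}$ given $\F_t$ by the tower property returns $\C_t(S_t)$. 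Granting the strict version $\delta_t(S_t)>0$ at early times, fix any target $t_{\rm lo}$: over the \emph{finite} set of reachable pairs $\{(t,S_t):t<t_{\rm lo}\}$ the minimum $c_{t_{\rm lo}}=\min\delta_t(S_t)$ is strictly positive, so any $c<c_{t_{\rm lo}}$ forces continuation at every such pair, giving $\T\ge t_{\rm lo}$; letting $t_{\rm lo}$ grow as $c\downarrow 0$ yields the claim, and in particular $c<\delta_0=\C_0-\Exp[\C_1|\F_0]$ already guarantees $\T\ge1$.

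The main obstacle is the \emph{strict} lower bound on the one-step improvement $\delta_t(S_t)$ in part~(ii). The weak inequality $\delta_t\ge0$ is a soft refinement argument, but quantifying the strict gain is delicate because $\C_t$ is itself the value of an inner optimization whose maximizing width-$2h$ interval shifts with $S_t$ (cf.\ \eqref{eq:root}); one must show that the extra Bernoulli observation perturbs the posterior enough that the re-optimized interval captures strictly more mass, and must control this gain from below over all reachable states. For the Beta prior this can be carried out explicitly, yielding both $\delta_t>0$ and the rate at which $t_{\rm lo}\to\infty$ as $c\downarrow0$ (Theorem~\ref{th:2}); for a general prior this quantitative control is precisely what resists a proof and is why part~(ii) is left as a conjecture.
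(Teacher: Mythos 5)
Your part~(i) is correct and is essentially the paper's own argument (Lemma\,\ref{lem:2a}): use the posterior mean as a feasible (not necessarily optimal) mid-point, apply Chebyshev to get $\C_t(S_t)\leq \sigma_t^2/h^2$, invoke uniform decay of the posterior variance, and note that $\tilde{\V}_t\geq 0$ makes $\C_t\leq c$ a horizon-free sufficient condition for stopping via \eqref{eq:backwards2}. Your $t_{\rm up}=O(1/c)$ for the Beta prior matches the paper's remark following Lemma\,\ref{lem:2a}; existence is all the Conjecture asks for, so the sharper $O(|\log c|)$ estimate of Theorem\,\ref{th:2} is not needed here.

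Part~(ii) is where the proposal has a genuine gap. Your reduction is fine as far as it goes: since $\V_{t+1}\leq\C_{t+1}$, one has $\tilde{\V}_t\leq\Exp[\C_{t+1}|\F_t]$, so $\delta_t(S_t)=\C_t(S_t)-\Exp[\C_{t+1}|\F_t]>c$ is a horizon-free sufficient condition for not stopping, and the finite-minimum argument over reachable pairs is sound \emph{conditional on} strict positivity $\delta_t(S_t)>0$. But that strict positivity — together with a lower bound on $\delta_t$ that is uniform in $S_t$ and decays in a controlled way in $t$, which is what lets $t_{\rm lo}$ grow as $c\downarrow 0$ — is the entire content of the claim, and you do not prove it; you assert that "for the Beta prior this can be carried out explicitly" and point to Theorem\,\ref{th:2}. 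That appeal is circular: the paper's proof of Theorem\,\ref{th:2} never establishes positivity of the one-step improvement (the paper only ever proves the weak inequality $\C_t\geq\Exp[\C_{t+1}|\F_t]$, eq.\,\eqref{eq:A10}, exactly as you do). Instead, the paper bypasses increments altogether with two global bounds: (a) a $c$-independent lower bound on the stopping cost itself, $\C_t(S_t)$ bounded below uniformly in $S_t$ by a constant times $(0.5-h)^{t+2n_a}/(t+2n_a+2\delta_a)$, obtained from monotonicity properties of the incomplete Beta function (eq.\,\eqref{eq:A30}); and (b) an upper bound on the value function, $\V_t\leq c(N+1-t)$, obtained by backward induction from the forced-stopping instant $N=O(|\log c|/h^2)$ of \eqref{eq:N}. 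Since $c\,(N(c)+1)\to 0$ as $c\to 0$ while the bound in (a) does not depend on $c$, the exact no-stop condition $\C_t>\V_t$ holds for all $t\leq\nu$ with $\nu$ of order $|\log c|/|\log(0.5-h)|$ (eq.\,\eqref{eq:nu}). To rescue your route you would have to prove a quantitative bound of the form $\delta_t(S_t)\gtrsim\rho^{\,t}$ uniformly in $S_t$ for the Beta prior, which requires tracking how the optimizing mid-point of \eqref{eq:root} moves when one more Bernoulli observation arrives — precisely the difficulty you identified but did not overcome, and arguably harder than the paper's (a)+(b).
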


Below we present a general proof of property i) of the Conjecture under the following additional assumption: Define the maximal conditional variance
\begin{equation}
\sigma_t^2 =\max_{0\leq S_t\leq t}\Exp\big[\big(\theta-\Exp[\theta|S_t]\big)^2|S_t\big]= \max_{0\leq S_t\leq t} \int_0^1 (\theta - \Exp[\theta|S_t])^2 \pi_t(\theta|S_t) d \theta,
 \label{eq:Vstar}
\end{equation}
where $\pi_t(\theta|S_t)$ is the posterior pdf defined in (\ref{eq:posterior}) and assume that
$\sigma_t  \to 0$ as $t \to \infty$. This forces the conditional variance to converge to 0 \textit{uniformly} in $S_t$. It also implies that the posterior distribution $\pi_t(\theta|S_t)$ converges, uniformly, to a degenerate measure at a single point (often the true $\theta$) as $t \to \infty$. This is clearly related to the {\it consistency} concept of posterior distributions in Bayesian statistics and is often considered a valid assumption (see \cite{choi}).

\begin{lemma}\label{lem:2a} Let $\sigma_t$ be defined as in \eqref{eq:Vstar} with $\lim_{t \to \infty} \sigma_t = 0$. Then for sufficiently large horizon there exists constant $t_{\rm up}$ depending only on $c$ such that $\T\leq t_{\rm up}$, i.e. property i) in the Conjecture is true.
\end{lemma}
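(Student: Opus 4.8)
The plan is to show that the optimal stopping condition is forced to hold everywhere on a deterministic time slice, by driving the optimum conditional complementary coverage probability $\C_t(S_t)$ below the per-sample cost $c$ uniformly in $S_t$. First I would establish the reduction. From the backward recursion \eqref{eq:backwards2} with initialization $\V_{N+1}=1$, a downward induction gives $\V_t(S_t)\geq0$ for all $t$ and $S_t$ (each $\C_t$ is a probability and $c>0$), so by \eqref{eq:barv} the continuation cost satisfies $c+\tilde{\V}_t(S_t)\geq c$. Recalling from \eqref{eq:opt-stop} that we continue precisely on $\Omega_t=\{0\leq S_t\leq t:\C_t(S_t)>c+\tilde{\V}_t(S_t)\}$, it follows that $\C_t(S_t)\leq c$ is a \emph{sufficient} condition for stopping at coordinate $(t,S_t)$. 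Hence, if I can exhibit a deterministic time $t_{\rm up}$, independent of $N$, with $\C_{t_{\rm up}}(S)\leq c$ for every $S\in\{0,\dots,t_{\rm up}\}$, then $\Omega_{t_{\rm up}}=\varnothing$ and therefore $\T\leq t_{\rm up}$.

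Second, I would produce a uniform bound on $\C_t(S_t)$ out of the variance assumption. Since, by \eqref{eq:2}, $\C_t(S_t)$ is the infimum over all mid-points $\hat{\theta}_t$ of $\Pro(|\hat{\theta}_t-\theta|>h\,|\,\F_t)$, it is bounded above by the value obtained with the \emph{suboptimal} choice $\hat{\theta}_t=\Exp[\theta\,|\,S_t]$, the posterior mean. Because the posterior $\pi_t(\theta|S_t)$ is supported on $[0,1]$, cropping the interval at $0$ and $1$ discards no posterior mass, so this choice yields $\C_t(S_t)\leq\Pro(|\theta-\Exp[\theta\,|\,S_t]|>h\,|\,S_t)$. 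Applying Chebyshev's inequality to the posterior and then maximizing over $S_t$ gives, with $\sigma_t$ as in \eqref{eq:Vstar}, the estimate $\max_{0\leq S_t\leq t}\C_t(S_t)\leq \sigma_t^2/h^2$.

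Finally, I would invoke the hypothesis $\sigma_t\to0$. Then $\sigma_t^2/h^2\to0$, so there is a smallest integer $t_{\rm up}$ for which $\sigma_{t_{\rm up}}^2\leq c\,h^2$. This $t_{\rm up}$ depends on $c$, on $h$, and on the prior through $\sigma_t$, but \emph{not} on $N$, since neither $\C_t(S_t)$ nor $\sigma_t$ involves the horizon. At $t_{\rm up}$ the bound of the second step gives $\C_{t_{\rm up}}(S)\leq c$ for all $S$, whence by the reduction of the first step the whole slice is a stopping region and $\T\leq t_{\rm up}$, proving property i) for every horizon $N\geq t_{\rm up}$ (for $N<t_{\rm up}$ the bound holds trivially).

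The arguments are short, and I expect no serious technical obstacle; the two delicate points are merely (a) justifying $\tilde{\V}_t\geq0$ so that $\C_t\leq c$ genuinely forces stopping, and (b) confirming that the endpoint cropping leaves the Chebyshev estimate intact. The real conceptual crux is recognizing that the \emph{suboptimal} posterior-mean estimator already controls the \emph{optimal} $\C_t$, which is what converts the abstract stopping rule into the concrete variance bound $\sigma_t^2/h^2$; the genuinely restrictive ingredient is then the uniform-in-$S_t$ decay $\sigma_t\to0$, precisely the extra assumption traded for generality of the prior.
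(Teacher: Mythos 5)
Your proof is correct and follows essentially the same route as the paper's: bound $\C_t(S_t)$ by the coverage of the suboptimal posterior-mean mid-point, apply Chebyshev's inequality to obtain $\C_t(S_t)\leq \sigma_t^2/h^2$ uniformly in $S_t$, and invoke $\sigma_t\to0$ to force $\C_{t_{\rm up}}\leq c\leq c+\tilde{\V}_{t_{\rm up}}$, so that every point of the slice $t_{\rm up}$ lies in the stopping region and $\T\leq t_{\rm up}$ independently of the horizon. Your only additions---the explicit induction giving $\tilde{\V}_t\geq0$ and the remark that endpoint cropping does not disturb the Chebyshev estimate---are details the paper leaves implicit.
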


\begin{proof} The proof is a simple application of the Chebyshev inequality in combination with \eqref{eq:Vstar}. Indeed we observe that
\begin{equation}
\C_t(S_t) = \inf_{\hat{\theta}_t}\Pro(|\theta-\hat{\theta}_t|>h|\F_t) \leq \Pro\big(|\theta-\Exp[\theta|\F_t]|>h|\F_t\big)\leq \frac{1}{h^2}\Exp\big[\big(\theta-\Exp[\theta|S_t]\big)^2|S_t\big]
\leq \frac{\sigma_t^2}{h^2}.
\end{equation}
Since $\sigma_t \to 0$ as $t \to \infty,$ there exists $N$ such that $\C_N\leq \frac{\sigma^2_N}{h^2}\leq c$ and, therefore, from \eqref{eq:backwards2} we conclude that $\C_N\leq c+\tilde{\V}_N$, which suggests that we will necessarily stop at $N$ for any value of $S_N$. Quantity $t_{\rm up}$ is the smallest $N$ for which this is true.
\end{proof}
%
%

\begin{remark}
The assumption $\lim_{t \to \infty} \sigma_t = 0$ does not hold for all prior distribution. A counterexample where it fails is when the prior is a two-point probability mass function, say $\Pro(\theta = 0.4) = \Pro(\theta = 0.6) = 0.5$. However, even for this case the Conjecture might still be valid since the requirement $\C_N(S_N)\leq\frac{\sigma_N^2}{h^2}<c$ used in our proof, is only sufficient for the validity of our claim.
\end{remark}

An interesting example where the assumption holds is when the prior is the Beta density $\pi(\theta) =\text{Beta}(\theta,p,q)$, where
\begin{equation}
\text{Beta}(\theta,p,q)=\frac{\theta^{p-1}(1-\theta)^{q-1}}{\int_0^1\theta^{p-1}(1-\theta)^{q-1}\,d\theta},~~p,q>0.
\label{eq:beta}
\end{equation}
To see this, we note that the posterior pdf is of the same type, namely $\pi(\theta|S_t) =\text{Beta}(\theta,p+S_t,t-S_t+q)$, and thus the maximal conditional variance in \eqref{eq:Vstar} becomes
\begin{equation}
\sigma_t^2 = \max_{0 \le S_t \le t} \frac{(p+S_t)(t-S_t+q)}{(t+p+q)^2 (t+p+q+1)} \le \frac{1}{4(t+p+q+1)},
\end{equation}
where the equality is attainable when $S_t = \frac{t+q-p}{2}$ is an integer.  Clearly, for fixed $p,q > 0$ we have $\sigma_t \to 0$ as $t \to \infty$, and thus the assumption of Lemma\,\ref{lem:2a} holds. Moreover, by the proof of Lemma\,\ref{lem:2a}, the optimum stopping time satisfies  $\T\leq \max\{0, \frac{1}{4 h^2 c} - p-q-1\}$ for all $c > 0.$  This bound is of the order of $c^{-1}$. In Theorem\,\ref{th:2}, Section~\ref{ssec:3B}, by applying a more advanced analysis, we will be able to improve it and provide an alternative estimate which is of the order of $|\log(c)|$ for the case of the symmetric prior $p=q$.

\begin{remark}
Property i) of the Conjecture suggests that the number of samples, under the optimum scheme, will never exceed the value $t_{\rm up}$ even if we allow the horizon to grow without limit. This interesting and uncommon characteristic was also observed in \cite{c77} but with cost function a variance of the classical mean square error. However, what is more intriguing in our conjecture is property ii), namely that we need first to accumulate a sufficient volume of information before we start asking ourselves whether we should stop sampling or not. This is an extremely uncommon feature and, to our knowledge, has never been reported before in Sequential Analysis as a property of optimum schemes. As we claim in our conjecture, we believe that both properties are valid for any prior $\pi(\theta)$. Fortunately, as we mentioned before, this double claim is not without solid evidence. Indeed with Theorem\,\ref{th:2}, we demonstrate its validity when the prior is the symmetric Beta density.
\end{remark}

\subsection{Performance Evaluation}

What we presented so far allows for the determination of the stopping rule of the proposed scheme. We would like now to compute its performance but also the performance of any stopping time which uses $S_t$ as its test statistic and is defined in terms of a sequence of sampling regions $\{\Omega_t\}$ in terms of $\{S_t\}$. In particular, we are interested in computing $\Exp[T|\theta], \Exp[T], \Pro(|\hat{\theta}_{T}-\theta|\leq h|\theta)$ and $\Pro(|\hat{\theta}_{T}-\theta|\leq h)$. Of course, we could obtain these quantities using Monte-Carlo simulations, but it is also possible to determine them numerically. The following lemma provides the necessary formulas.

\begin{lemma}\label{lem:3} Let the stopping time $T$ be bounded by $N$ having as test statistic the process $\{S_t\}$. Assume for each $t$ that $\Omega_t$ denotes the sampling region. Suppose also that for the combination $(t,S_t)$ the scheme provides the mid-point estimate $\hat{\theta}_t(S_t)$ and the corresponding conditional complementary coverage probability $C_t(S_t)=\Pro(|\hat{\theta}_t(S_t)-\theta|> h|\F_t)$. For $t=N-1,\ldots,0$, we then define the following backward recursions that must be applied for $S_t=0,1,\ldots,t$
\begin{align}
U_t(S_t)&=1+\theta \ind{S_t+1\in\Omega_{t+1}}U_{t+1}(S_t+1)+(1-\theta)\ind{S_t\in\Omega_{t+1}}U_{t+1}(S_t),\label{eq:lem3-1}\\
\bar{U}_t(S_t)&=1+g_{t+1}(S_t)\ind{S_t+1\in\Omega_{t+1}}\bar{U}_{t+1}(S_t+1)+\big(1-g_{t+1}(S_t)\big)\ind{S_t\in\Omega_{t+1}}\bar{U}_{t+1}(S_t),\label{eq:lem3-2}\\
W_t(S_t)&=\ind{|\hth_t-\theta|> h}\ind{S_t\not\in\Omega_t}+\big\{\theta W_{t+1}(S_t+1)+(1-\theta)W_{t+1}(S_t)\big\}\ind{S_t\in\Omega_t},\label{eq:lem3-3}\\
\bar{W}_t(S_t)&=C_t(S_t)\ind{S_t\not\in\Omega_t}+\big\{g_{t+1}(S_t) \bar{W}_{t+1}(S_t+1)+\big(1-g_{t+1}(S_t)\big)\bar{W}_{t+1}(S_t)\big\}\ind{S_t\in\Omega_t},\label{eq:lem3-4}
\end{align}
where $g_{t+1}(S_t)$ is defined in \eqref{eq:gt} and the four recursions are initialized with $U_N(S_N)=\bar{U}_N(S_N)=0, W_N(S_N)=\ind{|\hth_N-\theta|> h},\bar{W}_N(S_N)=C_N(S_N),\Omega_N=\varnothing$.
Then, $\Exp[T|\theta]=U_0(S_0),\Exp[T]=\bar{U}_0(S_0),\Pro(|\hat{\theta}_T-\theta|>h|\theta)=W_0(S_0)$ and 
$\Pro(|\hat{\theta}_T-\theta|>h)=\bar{W}_0(S_0)$.
\end{lemma}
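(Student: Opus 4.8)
The plan is to prove all four identities simultaneously by a single backward induction on $t$, running from the terminal time $N$ down to $0$ (a first‑step / dynamic‑programming analysis) and exploiting that $\{S_t\}$ is a Markov chain. The first task is to attach a probabilistic meaning to each recursively defined quantity. I would establish, as the inductive invariants, that for every reachable state $(t,S_t)$ one has $U_t(S_t)=\Exp[T-t\mid S_t,\,T>t]$ and $W_t(S_t)=\Pro(|\hth_T-\theta|>h\mid S_t,\,T\ge t)$ under fixed $\theta$, while $\bar U_t(S_t),\bar W_t(S_t)$ are the same conditional expectation and conditional probability taken under the prior‑averaged law. Evaluating these at $t=0$, where $S_0=0$ and the conditioning events $\{T>0\}$, $\{T\ge0\}$ are (in the nontrivial case $0\in\Omega_0$) certain, then yields exactly $\Exp[T\mid\theta]$, $\Exp[T]$, $\Pro(|\hth_T-\theta|>h\mid\theta)$ and $\Pro(|\hth_T-\theta|>h)$, which is the assertion. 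Note the deliberate difference in the conditioning events: the sample‑count invariants use $\{T>t\}$ (we are about to continue), whereas the miscoverage invariants use $\{T\ge t\}$ (the stopping decision at $t$ is itself included).

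I would first dispatch the fixed‑$\theta$ pair $(U_t,W_t)$, where $S_{t+1}=S_t+X_{t+1}$ with $X_{t+1}\sim\text{Bernoulli}(\theta)$ independent of $\F_t$. The base cases at $t=N$ are immediate: since $\Omega_N=\varnothing$ we necessarily stop at $N$, so no further samples are taken ($U_N\equiv0$) and the miscoverage event is already determined ($W_N=\ind{|\hth_N-\theta|>h}$). For the inductive step I condition on the first increment $X_{t+1}$: on $\{S_t=s\}\cap\{T>t\}$, which forces $s\in\Omega_t$, the chain moves to $s+1$ with probability $\theta$ and stays at $s$ with probability $1-\theta$; writing $T-t=1+(T-(t+1))$ and observing that the new state contributes $U_{t+1}$ if it lies in $\Omega_{t+1}$ and $0$ otherwise (because then $T=t+1$) reproduces \eqref{eq:lem3-1}. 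The same splitting for $W$, now conditioned on $\{T\ge t\}$ so that the stopping branch $\{s\notin\Omega_t\}$ contributes the terminal indicator $\ind{|\hth_t-\theta|>h}$ while the continuing branch passes to $W_{t+1}$, reproduces \eqref{eq:lem3-3}. I would remark that for $U$ the off‑region values $U_t(s)$, $s\notin\Omega_t$, are immaterial, since they re‑enter the recursion only weighted by $\ind{\cdot\in\Omega_t}$.

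The prior‑averaged pair $(\bar U_t,\bar W_t)$ follows from exactly the same argument after two substitutions, and this is where the essential care is required. Because $S_t$ is a sufficient statistic, the posterior $\pi_t(\theta\mid\F_t)$, and hence the conditional law of the next increment, depends on $\F_t$ only through $S_t$; consequently $\{S_t\}$ is again Markov under the marginal law, now with the posterior‑predictive transition $\Pro(X_{t+1}=1\mid\F_t)=g_{t+1}(S_t)$ of \eqref{eq:gt}, which accounts for the replacement $\theta\to g_{t+1}(S_t)$. The second substitution replaces the terminal indicator $\ind{|\hth_t-\theta|>h}$ by its posterior expectation $C_t(S_t)=\Pro(|\hth_t-\theta|>h\mid\F_t)$. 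Both substitutions are justified in one stroke by the tower property $\Pro(A\mid\F_t)=\Exp[\Pro(A\mid\F_{t+1})\mid\F_t]$: the inner conditional probability is precisely $\bar W_{t+1}(S_{t+1})$ by the inductive hypothesis, and averaging it over the two possible values of $S_{t+1}$ with weights $g_{t+1}(S_t)$ and $1-g_{t+1}(S_t)$ yields \eqref{eq:lem3-4}; the analogous computation gives \eqref{eq:lem3-2}.

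The main obstacle is precisely this averaged step: under the mixture the future increments $X_{t+1},X_{t+2},\dots$ are not independent of $\theta$ (they are only conditionally i.i.d.\ given $\theta$), so a naive first‑step decomposition is not available and everything must be organized through $\F_t$‑conditional quantities that integrate $\theta$ out via the posterior. The clean resolution is to carry the induction entirely at the level of these conditional expectations and probabilities and to invoke sufficiency together with the tower property, after which the dependence on $\theta$ is fully absorbed into $g_{t+1}(S_t)$ and $C_t(S_t)$. The only minor caveat, which I would flag explicitly, concerns the reading at $t=0$: the identities for $U_0,\bar U_0$ require $0\in\Omega_0$ (otherwise $T\equiv0$ and every quantity is trivial), and the terminal conventions $U_N=\bar U_N=0$ should likewise be understood as the values assigned once the horizon forces a stop.
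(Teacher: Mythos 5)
Your proof is correct and takes essentially the same route as the paper's: a backward dynamic-programming induction built on the tower property, the Markov structure of $\{S_t\}$ under fixed $\theta$, and sufficiency of $S_t$ (so that the marginal transitions become $g_{t+1}(S_t)$ and the terminal indicator is replaced by $C_t(S_t)$); the paper merely organizes the identical computation by unrolling $T$ and $\ind{|\hth_T-\theta|>h}$ into nested conditional expectations instead of stating state-wise invariants. Your explicit flag of the $0\in\Omega_0$ edge case (without which $T\equiv 0$ while $U_0(S_0)\geq 1$) is a legitimate caveat that the paper's proof glosses over.
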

\begin{proof}
The validity of these expressions is established in the Appendix.
\end{proof}
The applicability of Lemma\,\ref{lem:3} is clearly not limited to the proposed scheme but can be used to compute the performance of the fixed-sample-size and of other sequential alternatives that we intend to compare against the method we have developed.

\subsection{Beta Density as Prior}\label{ssec:3B}
Let us now find the particular form of our scheme when we adopt as our prior the Beta density $\pi(\theta) =\text{Beta}(\theta,a,a)$, where $\text{Beta}(\theta,p,q)$ is defined in \eqref{eq:beta}.
We observe that the selection $a=1$ in the prior corresponds to the uniform density in $[0,1]$. It is now straightforward to verify that the posterior pdf accepts a similar form, namely
\begin{equation}
\pi(\theta|S_t) =\text{Beta}(\theta,a+S_t,a+t-S_t),
\end{equation}
while the conditional complementary coverage probability at time $t$ becomes
\begin{equation}
\Pro(|\hat{\theta}_t-\theta|>h|\F_t)=1-I_{\min\{1,\hth_t+h\}}(a+S_t,a+t-S_t)+I_{\max(0,\hth_t-h)}(a+S_t,a+t-S_t),
\label{eq:25}
\end{equation}
where $I_x(p,q)$ is the incomplete Beta function (see \cite[Page 944]{Abramowitz}) which is the cdf of $\text{Beta}(\theta,p,q)$.

The Bayes estimator, according to \eqref{eq:root}, can be found as the solution of the equation
$$
\hat{\vartheta}_t=\text{arg}\left\{\hat{\theta}_t:\left(\frac{\hat{\theta}_t-h}{\hat{\theta}_t+h}\right)^{a+S_t-1}=\left(\frac{1-h-\hat{\theta}_t}{1+h-\hat{\theta}_t}\right)^{a+t-S_t-1}\right\} 
$$
corresponding to the root in the interval $[h,1-h]$. Such root always exists except when $a=1$ and $S_t=0$ or $t$. For these cases, $\hat{\vartheta}_t$ is equal to $h$ or $1-h$, depending on which value provides a larger conditional coverage probability. The resulting optimum conditional complementary coverage probability becomes  
\begin{equation}
\C_t(S_t)=1-I_{\min\{1,\hvth_t+h\}}(a+S_t,a+t-S_t)+I_{\max(0,\hvth_t-h)}(a+S_t,a+t-S_t).
\label{eq:15}
\end{equation}
Finally, as indicated in \eqref{eq:barv} and \eqref{eq:gt}, we need to find the probability $g_{t+1}(S_t)$, for which we have the following simple formula
\begin{equation}
g_{t+1}(S_t)=\Pro(X_{t+1}=1 |\F_t)=\frac{\Gamma(S_t+a+1)\Gamma(t+2a)}{\Gamma(S_t+a)\Gamma(t+2a+1)}=\frac{S_t+a}{t+2a}.
\end{equation}
We can now compute the sequences $\{\{\V_t(S_t)\}_{S_t=0}^t\}_{t=0}^N,\{\{\tilde{\V}_t(S_t)\}_{S_t=0}^t\}_{t=0}^N$ as explained in \eqref{eq:backwards2} and compare, element-by-element, $\{\{\C_t(S_t)\}_{S_t=0}^t\}_{t=0}^N$ with $\{\{\V_t(S_t)\}_{S_t=0}^t\}_{t=0}^N$ or $\{\{\C_t(S_t)\}_{S_t=0}^t\}_{t=0}^N$ with $\{\{c+\tilde{\V}_t(S_t)\}_{S_t=0}^t\}_{t=0}^N$ to identify the sampling and stopping regions.

For the particular prior adopted in \eqref{eq:beta}, as we mentioned before, the resulting optimum stopping time $\T$ enjoys the unique properties claimed in the Conjecture. The next theorem provides the necessary evidence.

\begin{theorem}\label{th:2}
The Conjecture is true when the prior is the Beta density $\pi(\theta) =\text{Beta}(\theta,a,a)$ with the optimum stopping time $\T$ satisfying $C_0 |\log(c)| \leq \T \leq C_1 |\log(c)|$ for constants $C_0 < C_1$ that depend only on $a$ and $h$.
\end{theorem}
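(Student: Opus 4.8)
The plan is to prove the two bounds separately, reducing each to sharp two‑sided exponential estimates on the complementary coverage $\C_t(S_t)$ of \eqref{eq:15}, uniform in $S_t$. The common starting point is the observation that $\{\C_t\}$ is a supermartingale under the Bayesian measure: since $\C_{t+1}(S_{t+1})=\inf_{\hth}\Pro(|\theta-\hth|>h|\F_{t+1})$ and the infimum of a conditional expectation dominates the conditional expectation of the infimum, we get $\Exp[\C_{t+1}|\F_t]\le\C_t$, so the predictive drop $D_t(S_t)=\C_t(S_t)-\Exp[\C_{t+1}(S_{t+1})|S_t]\ge 0$. I would control both the level of $\C_t$ and the size of $D_t$, using the explicit incomplete‑Beta form \eqref{eq:15} and the symmetry supplied by Lemma\,\ref{lem:2}.

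For the upper bound, note from \eqref{eq:backwards2} that we necessarily stop at any time $t$ reached with $\C_t(S_t)\le c$, because then $\C_t(S_t)\le c+\tilde{\V}_t(S_t)$ as $\tilde{\V}_t\ge 0$. Using \eqref{eq:15} with a Laplace/large‑deviation estimate for the posterior $\text{Beta}(a+S_t,a+t-S_t)$, and the fact that by symmetry and monotonicity of $\C_t(S_t)$ in $|S_t-t/2|$ the maximum over $S_t$ is attained at the balanced count, I would show
\[
\max_{0\le S_t\le t}\C_t(S_t)\le A_1\,e^{-\gamma_0 t},\qquad \gamma_0=-\tfrac12\log(1-4h^2)>0 .
\]
Then $\max_{S_t}\C_t(S_t)\le c$ as soon as $t\ge\gamma_0^{-1}(|\log c|+\log A_1)$, so every path has stopped by then, giving $\T\le C_1|\log(c)|$ with $C_1=\gamma_0^{-1}(1+o(1))$ (consistent with, but sharper than, the $c^{-1}$ bound of Lemma\,\ref{lem:2a}).

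For the lower bound I would show that no stopping point can appear early, i.e. $\Omega_t=\{0,\dots,t\}$ for all $t<C_0|\log c|$. From \eqref{eq:backwards2} and $\V_{t+1}\le\C_{t+1}$ we have $\tilde{\V}_t(S_t)\le\Exp[\C_{t+1}(S_{t+1})|S_t]$, hence the gain from continuing satisfies $\C_t(S_t)-\tilde{\V}_t(S_t)\ge D_t(S_t)$. The two ingredients are (i) a uniform multiplicative contraction $\Exp[\C_{t+1}(S_{t+1})|S_t]\le(1-\delta)\C_t(S_t)$ for some $\delta=\delta(a,h)>0$, so that $D_t(S_t)\ge\delta\,\C_t(S_t)$; and (ii) a matching lower bound on the level, $\C_t(S_t)\ge\C_t(0)\ge A_0\,e^{-\gamma_1 t}$, where the minimiser is the extreme count $S_t=0$ whose tail $\Pro(\theta>2h\mid S_t=0)$ for $\text{Beta}(a,a+t)$ decays at rate $\gamma_1=-\log(1-2h)$. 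Combining (i)–(ii), $\C_t(S_t)-\tilde{\V}_t(S_t)\ge\delta A_0\,e^{-\gamma_1 t}>c$ for every $S_t$ whenever $t<\gamma_1^{-1}(|\log c|+\log(\delta A_0))$; below this threshold no coordinate can stop, so $\T\ge C_0|\log c|$ with $C_0=\gamma_1^{-1}(1-o(1))$. Since $\gamma_1-\gamma_0=\tfrac12\log\frac{1+2h}{1-2h}>0$, we obtain $C_0<C_1$, both depending only on $a$ and $h$.

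The main obstacle is ingredient (i): the uniform‑in‑$S_t$ contraction $\Exp[\C_{t+1}(S_{t+1})|S_t]\le(1-\delta)\C_t(S_t)$. This requires comparing the incomplete‑Beta tails at the two successor counts $S_t+1$ and $S_t$, weighted by the predictive probability $g_{t+1}(S_t)=(S_t+a)/(t+2a)$ of \eqref{eq:gt}, against the tail at $(t,S_t)$, and proving a strict constant‑factor reduction that degrades neither as the posterior mean approaches $0$ or $1$ (where the window in \eqref{eq:15} is cropped) nor as $S_t\to t/2$ (where the decay rate is smallest), and which moreover holds for all $t$, not only asymptotically. It amounts to a refined, uniform Laplace estimate showing that a single additional observation sharpens the posterior tail by a factor bounded away from $1$; the symmetry of Lemma\,\ref{lem:2}, the monotonicity of $\C_t(S_t)$ in $|S_t-t/2|$, and uniform bounds on the ratio of consecutive incomplete‑Beta functions are the tools I would use to pin it down.
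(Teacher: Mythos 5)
Your upper bound is essentially the paper's own argument: a uniform-in-$S_t$ exponential bound on $\C_t$ (the paper derives $\C_t(S_t)\leq 2e^{-2h^2(t+2a+1)}$ from a sub-Gaussian moment bound for Beta distributions) combined with the observation that $\C_t\leq c$ forces $\C_t\leq c+\tilde{\V}_t$, hence stopping; this part is fine, and you do not even need your unproved claim that the maximum of $\C_t(S_t)$ sits at the balanced count.

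The lower bound, however, contains a genuine gap: your ingredient (i), the uniform contraction $\Exp[\C_{t+1}\,|\,S_t]\leq(1-\delta)\C_t(S_t)$ with $\delta=\delta(a,h)>0$, is false, and the whole one-step-lookahead scheme built on it cannot be repaired to give the claimed constant. To see why, note that for any $\F_t$-measurable center $\hth$ the tower property gives the exact identity $\Pro(|\theta-\hth|>h\,|\,\F_t)=\Exp[\Pro(|\theta-\hth|>h\,|\,\F_{t+1})\,|\,\F_t]$, so the drop $D_t=\C_t-\Exp[\C_{t+1}|\F_t]$ is produced \emph{solely} by re-optimizing the mid-point at time $t+1$. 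At extreme counts this re-optimization gains almost nothing: take $a=1$ and $S_t=0$, where $\hvth_t(0)=\hvth_{t+1}(0)=h$ (both cropped), and, solving \eqref{eq:root}, $\hvth_{t+1}(1)=h+O\big((1-2h)^t\big)$. The down-branch recentering gain is then exactly zero, the up-branch gain is an exponentially small fraction of $\C_{t+1}(1)$ and is weighted by $g_{t+1}(0)\approx 1/t$, so one finds $D_t(0)/\C_t(0)=O\big(h^2t(1-2h)^t\big)\to0$: the contraction factor tends to $1$, not $1-\delta$. Consequently your sufficient condition for continuing, $\C_t-\tilde{\V}_t\geq D_t>c$, already fails at $S_t=0$ for $t$ of order $|\log c|/\big(2\gamma_1\big)$ and below your target $\gamma_1^{-1}|\log c|$; more fundamentally, the bound $\tilde{\V}_t\leq\Exp[\C_{t+1}|\F_t]$ (``stopping at $t+1$'') is too weak, because at extreme counts the benefit of continuing is realized only over many steps. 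This is precisely what the paper's proof captures: by backward induction on $\V_t\leq c+\Exp[\V_{t+1}|\F_t]$, starting from $\V_N=\C_N\leq c$, it shows $\V_t\leq c(N+1-t)$, i.e.\ the continuation value is at most the total remaining sampling cost, which is $O(c|\log c|)$; comparing this with a uniform-in-$S_t$ lower bound $\C_t\gtrsim (0.5-h)^{t+2n_a}/t$ (obtained from incomplete-Beta inequalities, with no need to identify the minimizing $S_t$) yields the no-stopping region $t\leq C_0|\log c|$. If you want to keep your structure, you must replace the one-step comparison by such a multi-step bound on $\V_t$; the uniform-contraction route is not salvageable.
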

\begin{proof}
The proof is very technical and detailed in the Appendix. Unfortunately, the analytical techniques developed for the specific prior are not directly extendable to the general case.
\end{proof}
Perhaps, it is worth mentioning the fact that from the proof of Theorem\,\ref{th:2}, we conclude that the two estimates for $t_{\rm up}$ and $t_{\rm lo}$ in \eqref{eq:N},\eqref{eq:nu} grow linearly in $|\log(c)|$ having drastically different multiplicative coefficients ($C_0$ of the order of $\frac{1}{2h^2}$ versus $C_1$ of the order of $\frac{1}{|\log(0.5-h)|}$) and different offsets. 

\begin{figure}[h]
\centering
  \includegraphics[width=3.2in]{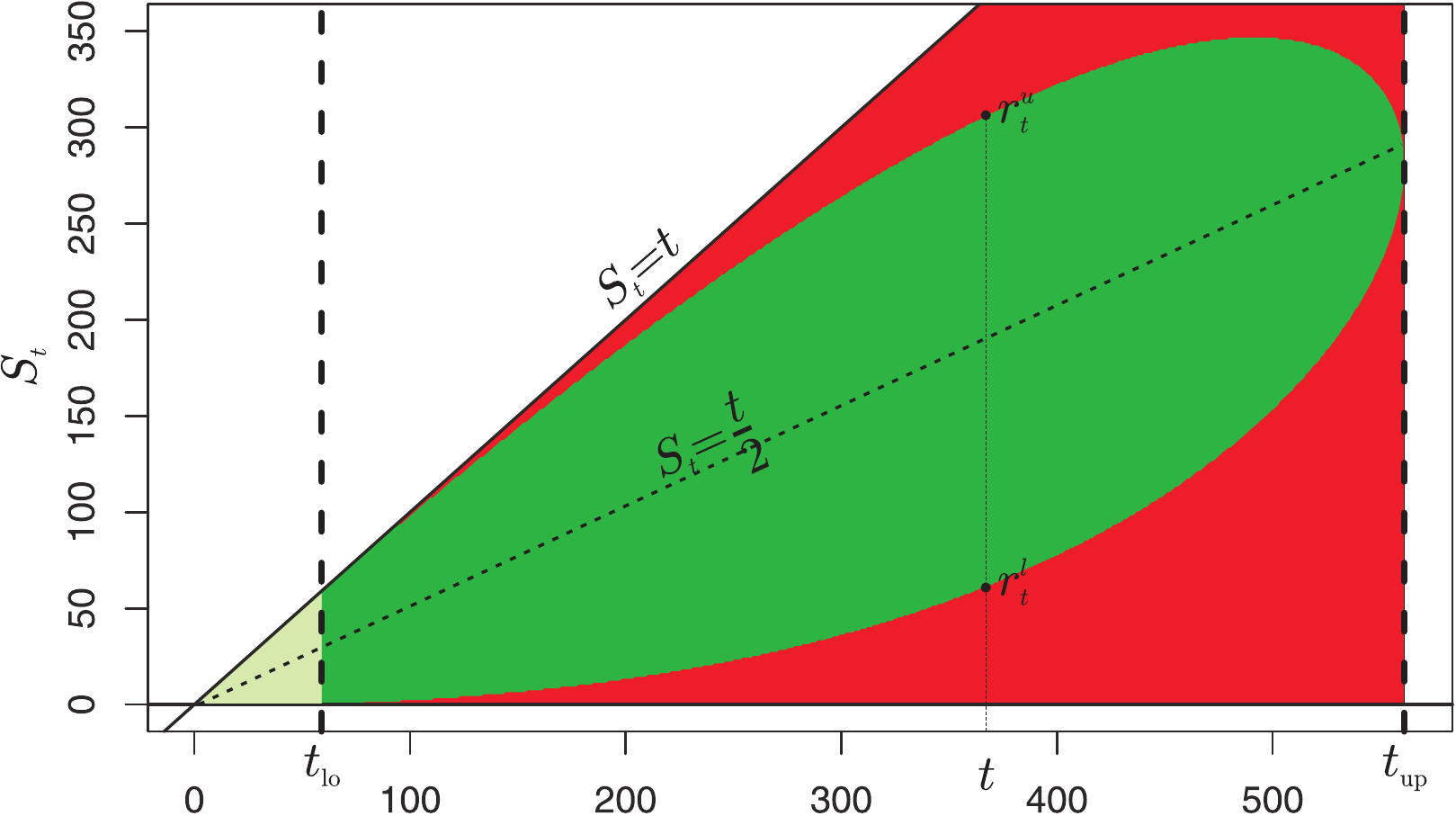}%
\caption{Sampling (green) and stopping (red) regions for $a=1, h = 0.05$ and $c = 0.0001$. Upper and lower bounds for optimum stopping time: $t_{\rm lo}=59$ and $t_{\rm up}=561$. No possibility of stopping (light green).}
\label{fig:Fig1}
\end{figure}
As an illustration for these properties we consider $a=1, h=0.05$, and $c=0.0001$. Fig.\,\ref{fig:Fig1} depicts the sampling (green) and the stopping (red) region in terms of the test statistic $S_t$. Both regions are clearly limited between the lines $S_t=t$ and $S_t=0$. Even though we have marked a whole region in red, only the points that are next to the green region are actually accessible because $S_t$ can increase at most by one unit as we go from $t$ to $t+1$. We can also see the two bounds $t_{\rm up}=561$ and $t_{\rm lo}=59$ for $\T$. For $t\leq t_{\rm lo}$ the light green region covers all points $0\leq S_t\leq t$, thus identifying the time instances we can never stop. Also, 
we note that once we pass $t_{\rm up}$ we are in the stopping region suggesting that we must necessarily stop at $t_{\rm up}$. For each $t_{\rm lo}\leq t\leq t_{\rm up}$ the stopping region has an upper $r_t^u$ and a 
lower $r_t^l$ threshold and, as long as $S_t$ is between these two limits, we need to sample. Since the prior distribution is symmetric with respect to $1/2$, then, according to Theorem\,\ref{th:1}, the sampling region is symmetric around $t/2$, implying that $r_t^u+r_t^l=t$.

\begin{figure}[b]
\centering
\includegraphics[width=3.2in]{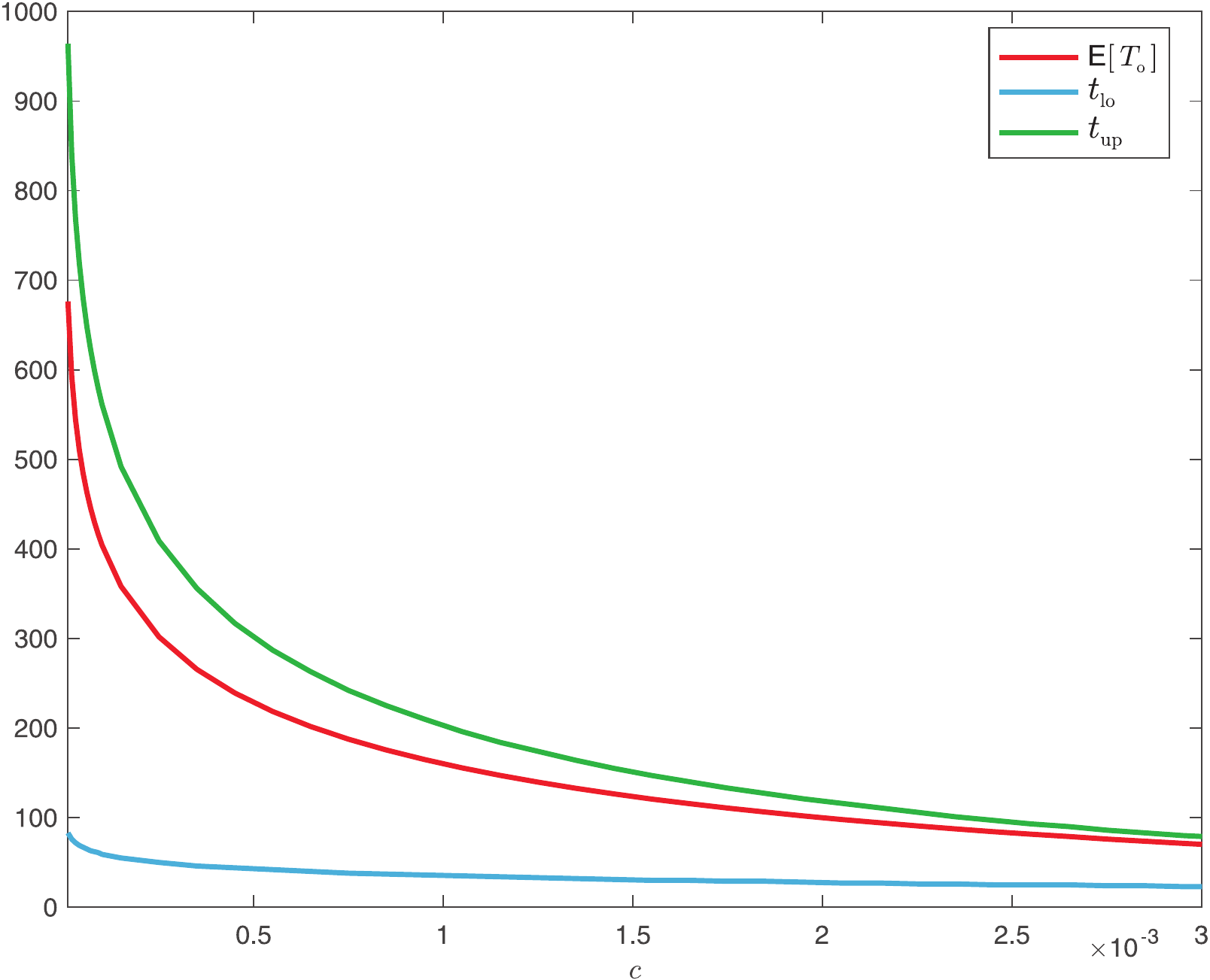}
\caption{Average sample size (red), lower $t_{\rm lo}$ (blue) and upper $t_{\rm up}$ limit (green), as functions of $c$ for optimum stopping time $\T$ when $a=1$ and $h=0.05$.}
\label{fig:bounds}
\end{figure}
In Fig.\,\ref{fig:bounds}, after using \eqref{eq:lem3-2}, we plot the average sample size and the two limits $t_{\rm lo},t_{\rm up}$  of $\T$ as functions of $c$ for $a=1$ and $h=0.05$. We can see that the lower limit $t_{\rm lo}$ is significantly smaller than the resulting average, suggesting that the optimum scheme very quickly regards the accumulated information as capable of providing reliable interval estimates and therefore starts the process of questioning whether to stop or continue sampling.

\section{Comparisons}
Let us now compare our scheme with the optimal fixed-sample-size (FSS) and two sequential methods: The first was proposed by Frey in \cite{f10} and the second, the \textit{Conditional Method}, was proposed in our earlier work in \cite{gym17}. Frey's method uses a modified Wald-type  sequential confidence interval based on the stopping time 
\begin{eqnarray} 
T_{\rm F} = \inf\left\{t \ge 0:   \frac{\tilde{\theta}_{t, k} (1 - \tilde{\theta}_{t, k})}{t}\leq \left(\frac{h}{z_{\frac{\gamma}{2}}}\right)^2  \right\},
\label{eqn_freystop}
\end{eqnarray}
where $\tilde{\theta}_{t, k} = \frac{S_t+k}{t+2k}$, $k>0$ is a pre-specified constant and $\gamma = \gamma(k, h, \alpha)$ is chosen so that the confidence interval $[\hat{\theta}_{T_{\rm F}}-h, \hat{\theta}_{T_{\rm F}}+h]$, with $\hat{\theta}_{t}=\frac{S_t}{t}$, has a confidence level of at least $1-\alpha$. Table\,\ref{tab:1} provides the values of $k$ and $\gamma$ recommended in \cite{f10} for best results.
\begin{table}[h]
\caption{Choices of $k$ and $\gamma$ for $90\%$, $95\%$, and $99\%$ confidence intervals
of fixed half-width $h$ in \cite{f10}.}
\centering{}%
\begin{tabular}{|c|c|c|c|c|c|c|}
\hline
 & \multicolumn{2}{c|}{ $90\%$} & \multicolumn{2}{c|}{ $95\%$} & \multicolumn{2}{c|}{ $99\%$}\tabularnewline
\hline
\hline
$h$  & $k$  & $\gamma$  & $k$  & $\gamma$  & $k$  & $\gamma$\tabularnewline
\hline
$0.10$  & $4$  & $0.0754$  & $4$  & $0.0356$  & $6$  & $0.0068$\tabularnewline
\hline
$0.05$  & $4$  & $0.0859$  & $6$  & $0.0433$  & $8$  & $0.0083$\tabularnewline
\hline
$0.01$  & $8$  & $0.0972$  & $10$  & $0.0487$  & $14$  & $0.0097$\tabularnewline
\hline
\end{tabular}
\label{tab:1}
\end{table}
From \eqref{eqn_freystop} and using the fact that $x(1-x)\leq\frac{1}{4}$ we conclude that the corresponding stopping time satisfies $T_{\rm F}\leq\lceil\frac{z_{\frac{\gamma}{2}}^2}{4h^2}\rceil=N$.
Regarding the finite-sample-size method, it uses the optimum Bayes estimator $\hvth_t$, obtained in \eqref{eq:1} and the number of samples $t$ is selected to meet the desired coverage probability. Finally, for the conditional method in \cite{gym17}, we should point out that it is a general sequential parameter estimation technique based on conditional costs which is not limited to binomial proportions. For the problem of interest, we have $T_{\rm C}=\inf\{t\geq0: \C_t\leq \beta\}$ and $\hat{\theta}_{T_{\rm C}}=\hat{\vartheta}_{T_{\rm C}}$, where $\hat{\vartheta}_t,\C_t$ are the Bayes estimator and the corresponding optimum conditional complementary coverage probability defined in \eqref{eq:1},\eqref{eq:2}. Threshold $\beta$ is selected to guarantee that the resulting coverage probability is $1-\alpha$. For $\C_t$ we have from the proof of Theorem\,\ref{th:1}, eq. \eqref{eq:A1}, that $\C_t\leq 2e^{-2h^2(t+2a+1)}$, consequently $T_{\rm C}\leq \lceil\max\{\frac{|\log(\frac{\beta}{2})|}{2h^2}-2a-1,0\}\rceil=N$. In other words, all four schemes satisfy the assumption of Lemma\,\ref{lem:3} of bounded stopping time, therefore the corresponding performance can be computed numerically by applying the recursions of the lemma without the need to perform Monte-Carlo simulations.

\begin{figure}
\centering
\includegraphics[width=3.2in]{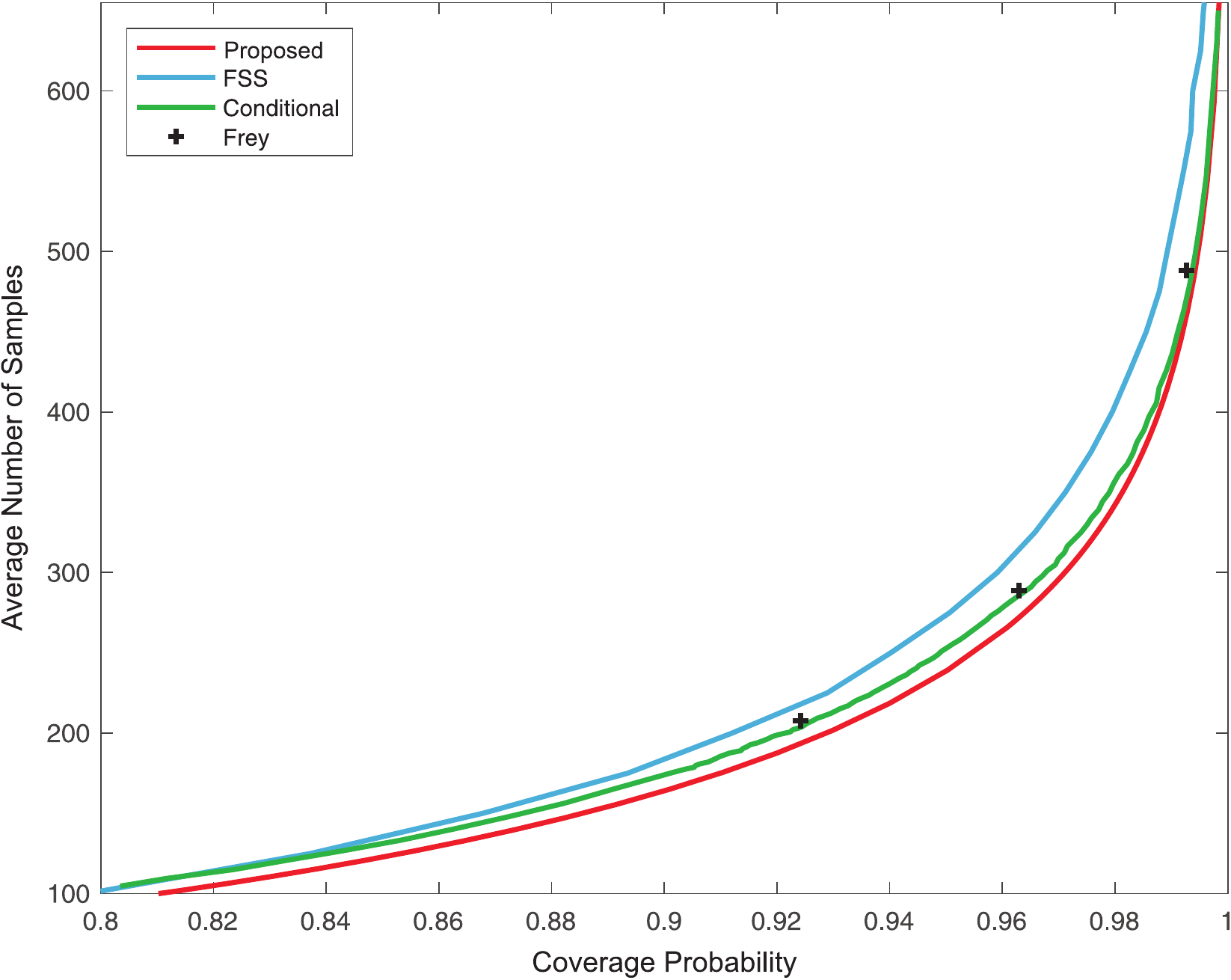}
\caption{Average samples size versus coverage probability for proposed (red), Frey (black $+$), fixed-sample-size (blue) and conditional (green), for $a=1$ and $h=0.05$.}
\label{fig:Fig3}
\end{figure}
For the competing methods using \eqref{eq:lem3-2},\eqref{eq:lem3-4}, we plot in Fig.\,\ref{fig:Fig3} the average number of samples $\Exp[T]$ versus the coverage probability $\Pro(|\hat{\theta}_T-\theta|\leq h)$ when $a=1$ and $h=0.05$. Note that we have three points for Frey's scheme because of the tuning parameters $k$ and $\gamma$ which are provided in Table\,I only for three confidence levels. As we can see, the proposed method outperforms the fixed-sample-size and both alternative sequential techniques. It is only at very high coverage probability levels that the difference between the three sequential schemes becomes less pronounced.

As we pointed out in \eqref{eq:second_criterion*}, Section\,II, there is practical interest in evaluating the performance for each individual $\theta$. Clearly in this case, the requirement is to be able to guarantee a minimal coverage probability \textit{for all} $\theta$. 
Again, we resort to Lemma\,\ref{lem:3} and use \eqref{eq:lem3-1},\eqref{eq:lem3-3} to evaluate the performance of the competing methods for each $\theta$. In Fig.\,\ref{fig:Fig4a}, we plot the coverage probability for each test versus $\theta$ and in Fig.\,\ref{fig:Fig4b}, the corresponding average sample size required to obtain this performance. Parameters were selected so as all competing schemes provide \textit{the same worst-case coverage probability} assuring a coverage of at least $0.95$ for all $\theta$. By observing the two figures, we can draw the following conclusions: The fixed-sample-size scheme can require up to almost eight times more samples compared to the proposed. Of course, one may argue that it produces higher coverage probability levels. Indeed this is true, but, unfortunately, this increased performance cannot be traded for a reduced sample size without compromising the worst-case level. Consequently, what we observe is in fact the best the fixed-sample-size method can offer. The conditional scheme, around $\theta=0.5$, requires up to 30\% more samples which, as in the case of fixed-sample-size, produce higher coverage probabilities. Again, it is impossible to sacrifice part of this increased performance to improve the corresponding sample size without degrading the worst-case coverage probability. Finally, we can see that the proposed and Frey's scheme require similar samples over most $\theta$. However, we observe that the proposed method has a coverage probability profile which is better than Frey's, since for most $\theta$ the corresponding probability is larger. Frey's scheme is slightly better only for $\theta$ close to 0 and 1. But even for these values of $\theta$ the proposed scheme requires almost 50\% less samples.
\begin{figure}[t!]
\centering
\subfloat[\label{fig:Fig4a}]{%
  \includegraphics[width=3.2in]{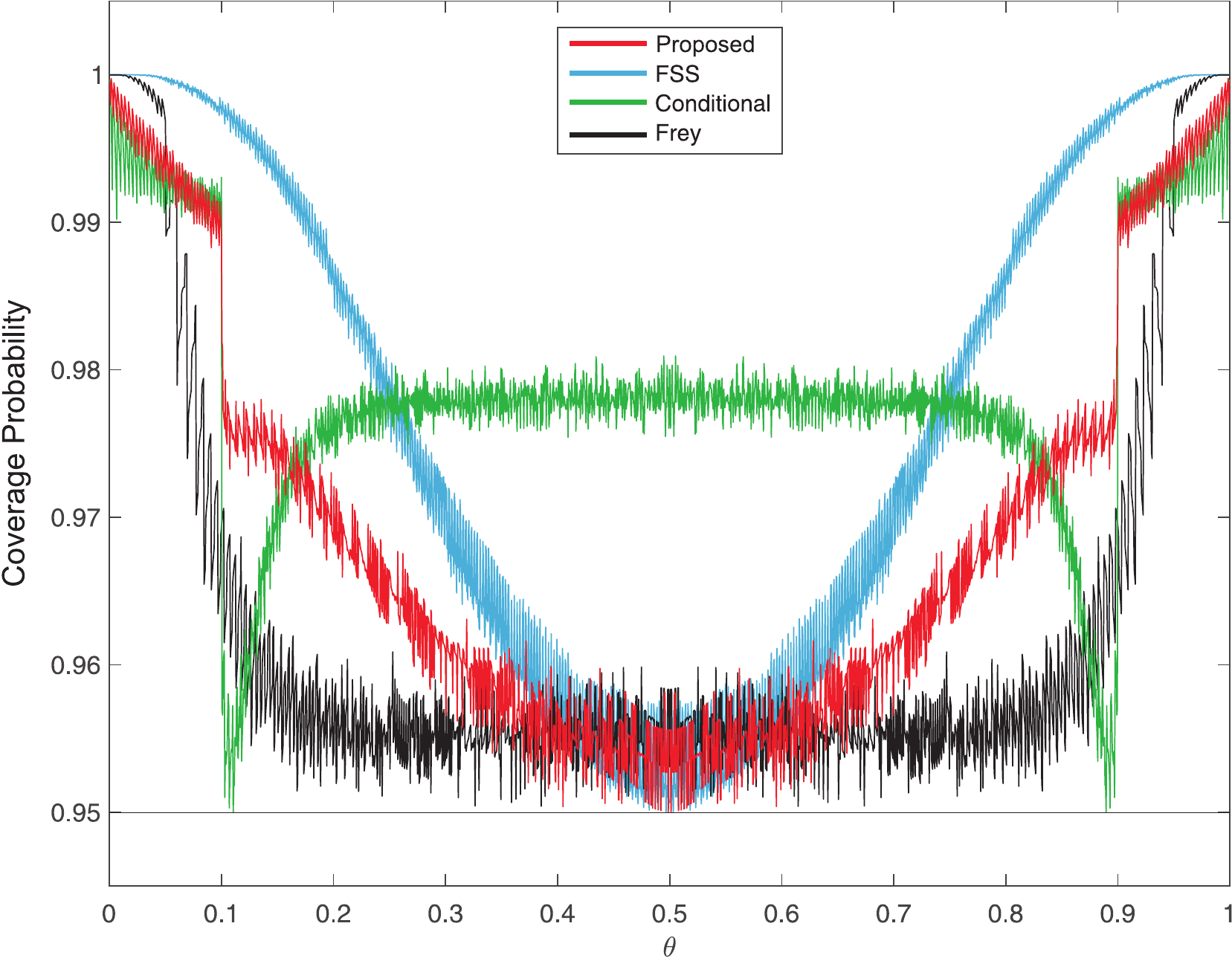}%
  }~
\subfloat[\label{fig:Fig4b}]{%
  \includegraphics[width=3.2in]{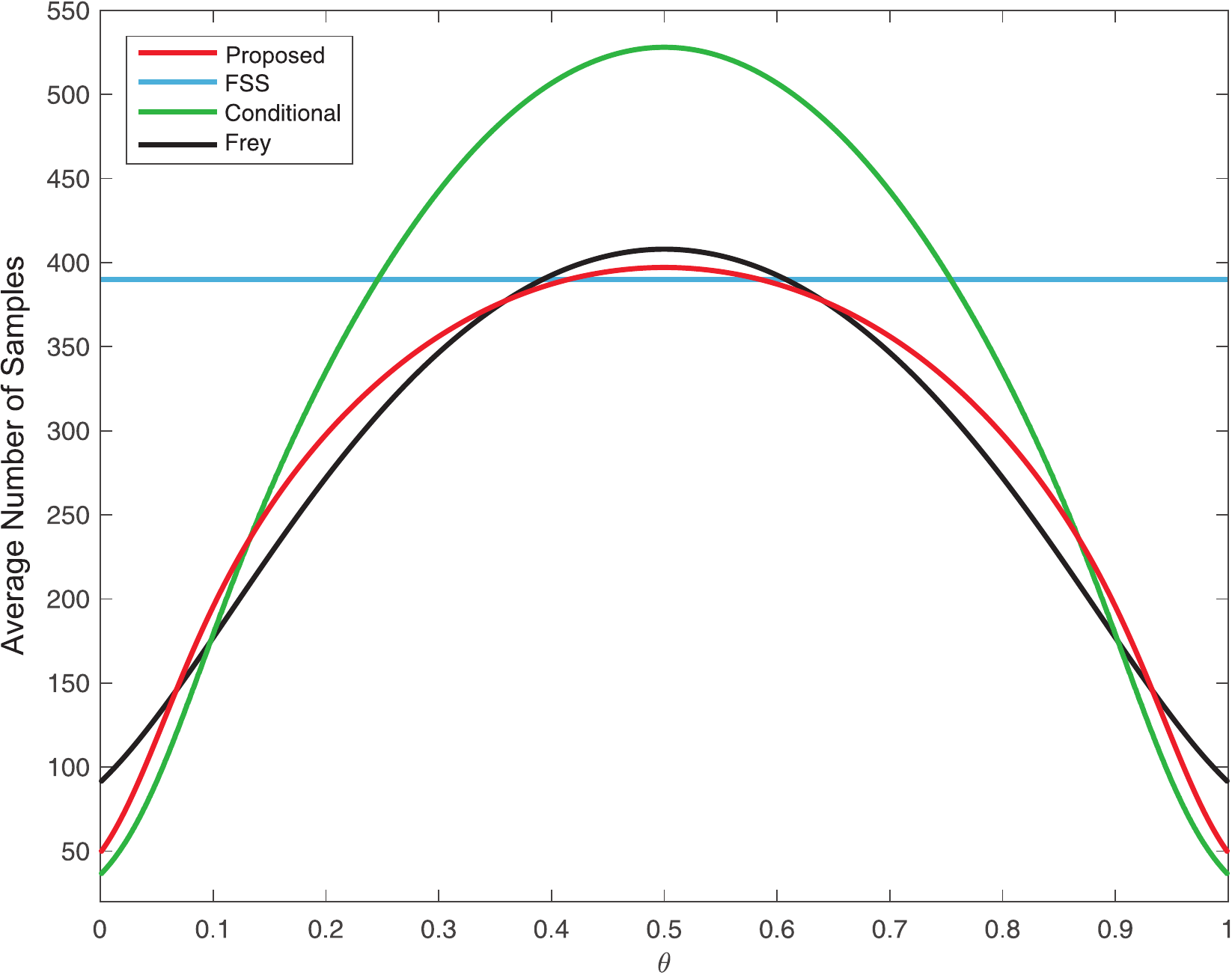}%
  }
\caption{Coverage probability (a) and Average sample size (b) as a function of proportion $\theta$ for proposed (red), Frey (black), fixed-sample-size (blue) and conditional (green) when $a=1$, $h=0.05$ and worst-case coverage probability $0.95$.}
\label{fig:Fig4}
\end{figure}

\section{Conclusions}
We proposed an optimal sequential scheme for obtaining confidence intervals for a binomial proportion under a well defined formulation. We proved that, for a particular prior (Beta density), our optimum stopping time enjoys certain uncommon properties not encountered in solutions of other classical optimal stopping problems. We also conjectured that these properties are present with any prior. Specifically, our claim is that our stopping time is always bounded from above and below, suggesting that we need to first accumulate a sufficient amount of information before we start applying our stopping rule, and that our stopping time will always terminate at a specific deterministic time even if we allow the time horizon to be infinite. Finally, our scheme was compared against the optimum fixed-sample-size procedure and against existing sequential alternatives. Numerical performance evaluations showed that the proposed method exhibits an overall improved performance profile compared to its rivals.


\section{Acknowledgments}
This work was supported by the US National Science Foundation under Grant CIF\,1513373 through Rutgers University and under Grant CMMI\,1362876 through Georgia Institute of Technology.

\appendix

\noindent\textit{Proof of Lemma\,\ref{lem:1}:} From \eqref{eq:step1} we can write
\begin{align}
J(T,\hat{\theta}_T)&=c\Exp[T] + \Pro(|\hat{\theta}_T-\theta|>h)=\sum_{t=0}^N\Exp\left[\left\{ct+\ind{|\hat{\theta}_t-\theta|>h}\right\}\ind{T=t}\right]\nonumber\\
&=\sum_{t=0}^N\Exp\left[\Exp\left[ct+\ind{|\hat{\theta}_t-\theta|>h}|\F_t\right]\ind{T=t}\right]
=\sum_{t=0}^N\Exp\left[\left\{ct+\Pro(|\hat{\theta}_t-\theta|>h|\F_t)\right\}\ind{T=t}\right]\label{eq:A1}\\
&\geq \sum_{t=0}^N\Exp\left[\left\{ct+\inf_{\hat{\theta}_t}\Pro(|\hat{\theta}_t-\theta|>h|\F_t)\right\}\ind{T=t}\right]
=\sum_{t=0}^N\Exp\left[\{ct+\C_t\}\ind{T=t}\right]\label{eq:A2}\\
&=\Exp[cT+\C_T].\nonumber
\end{align}
The first equality in \eqref{eq:A1} is true because $\ind{T=t}$ is $\F_t$-measurable, also
we have equality in \eqref{eq:A2} if we select $\hat{\theta}_t=\hat{\vartheta}_t$ when $\{T=t\}$. We observe that changing the order of summation and expectation presents absolutely no complication because the stopping time is bounded by the deterministic quantity $N$. \qed
\vskip0.2cm
\noindent\textit{Proof of Theorem\,\ref{th:1}:} If $\alpha\geq\Pro(|\hat{\vartheta}_0-\theta|> h)$ then stopping at $\T=0$ corresponds to the smallest possible (average) number of samples while, at the same time, we satisfy the coverage probability constraint.

To prove ii) we first show that there exists $N_\alpha$ such that $\Pro(|\hat{\vartheta}_{N_\alpha}-\theta|> h)<\alpha$.
Note that 
\begin{multline}
\Pro(|\hat{\vartheta}_t-\theta|> h)\leq \Pro\left(\left|\frac{S_t}{t}-\theta\right|> h\right)\leq\frac{1}{h^2}\Exp\left[\left(\frac{S_t}{t}-\theta\right)^2\right]\\
=\frac{1}{h^2}\Exp\left[\Exp\left[\left(\frac{S_t}{t}-\theta\right)^2\Big|\theta\right]\right]
=\frac{1}{h^2}\Exp\left[\frac{\theta(1-\theta)}{t}\right]\leq\frac{1}{4 h^2t},
\label{eq:A15}
\end{multline}
where we used the fact that $\frac{S_t}{t}$ is not the optimum Bayes estimator of the mid-point, then we applied the Chebyshev's inequality, then the fact that $\frac{S_t}{t}$ is an estimator of $\theta$ with estimation error variance equal to $\frac{\theta(1-\theta)}{t}$ and finally that $\theta(1-\theta)\leq\frac{1}{4}$.
From \eqref{eq:A15} we conclude that $\Pro(|\hat{\vartheta}_t-\theta|> h)\to0$ as $t\to\infty$ therefore, there exists $N_\alpha$ such that $\Pro(|\hat{\vartheta}_{N_\alpha}-\theta|> h)<\alpha$.

Fix $N\geq N_{\alpha}$ and denote
$
\V_t(S_t,c)=\inf_{t\leq T\leq N}\Exp[c(T-t)+\C_T|\F_t],
$
where we underline the dependence of $\V_t$ on $c$ (in addition to $S_t$). For $0\leq c_1\leq c_2$ and $T\geq t$ we can write
$$
c_1(T-t)+\C_T\leq c_2(T-t)+\C_T,
$$
which, after taking expectation conditioned on $\F_t$ and then infimum over $t\leq T\leq N$, proves
that $\V_t(S_t,c)$ is increasing in $c$. The increase of $\V_t(S_t,c)$ with respect to $c$ also suggests that the optimum stopping time $\T(c)$, defined in \eqref{eq:opt-stop}, is a decreasing function of $c$. 

Consider now the sequence of optimum complementary coverage probabilities $\{\C_t\}$, we observe
\begin{multline}
\C_t=\inf_{\hat{\theta}}\Pro(|\hat{\theta}-\theta|>h|\F_t)=\inf_{\hat{\theta}}\Exp[\Pro(|\hat{\theta}-\theta|>h|\F_{t+1})|\F_t]\\
\geq \Exp\left[\inf_{\hat{\theta}}\Pro(|\hat{\theta}-\theta|>h|\F_{t+1})|\F_t\right]
=\Exp[\C_{t+1}|\F_t].
\label{eq:A10}
\end{multline}
We can then write
\begin{multline}
\Pro(|\hat{\vartheta}_{\T(c)}-\theta|>h)=\Exp[\C_{\T(c)}]=\C_0-\Exp\left[\sum_{t=0}^{\T(c)-1}\{\C_t-\C_{t+1}\}\right]\\
=\C_0-\Exp\left[\sum_{t=0}^{N}\{\C_{t}-\C_{t+1}\}\ind{\T(c)>t}\right]
=\C_0-\Exp\left[\sum_{t=0}^{N}\{\C_t-\Exp[\C_{t+1}|\F_t]\}\ind{\T(c)>t}\right],
\end{multline}
where for the last equality we used the fact that $\ind{\T(c)>t}$ is $\F_t$-measurable. This combined with \eqref{eq:A10} and the decrease of $\T(c)$ with respect to $c$, implies that $\Pro(|\hat{\vartheta}_{\T(c)}-\theta|>h)$ is increasing in $c$.

For $c=1$ we stop at 0 and, therefore, $\Pro(|\hat{\vartheta}_{\T(1)}-\theta|>h)=\Pro(|\hat{\vartheta}_{0}-\theta|>h)>\alpha$. Set now $c=0$ which suggests that the cost of sampling is zero and therefore the optimum is to stop at $N$ (we also deduce this by combining \eqref{eq:opt-stop} and \eqref{eq:A10}). This yields $\Pro(|\hat{\vartheta}_{\T(0)}-\theta|>h)=\Pro(|\hat{\vartheta}_{N}-\theta|>h)=\Exp[\C_N]$.
Now from \eqref{eq:A10} by averaging we conclude that $\Exp[\C_t]$ is decreasing in $t$ and for $N>N_\alpha$ we have $\Exp[\C_N]\leq\Exp[\C_{N_\alpha}]<\alpha$, implying $\Pro(|\hat{\vartheta}_{\T(0)}-\theta|>h)<\alpha$. As mentioned, $\Pro(|\hat{\vartheta}_{\T(c)}-\theta|>h)$ is increasing in $c$, if it is also continuous then there exists $0<c_*<1$ satisfying $\Pro(|\hat{\vartheta}_{\T(c_*)}-\theta|>h)=\alpha$ which means that $\T(c_*)$ solves the constrained problem. In case the function $\Pro(|\hat{\vartheta}_{\T(c)}-\theta|>h)$ exhibits a jump at $c_*$ such that for $c_*\text{-}$ the probability is strictly smaller than $\alpha$ while for $c_*\text{+}$ it is strictly larger, then before taking any samples we need to perform a randomization to decide which of the two stopping times $\T(c_*\text{-}),\T(c_*\text{+})$ to use. The randomization probability must be selected so that we satisfy the constraint with equality.\qed

\vskip0.2cm
\noindent\textit{Proof of Lemma\,\ref{lem:3}:}
We prove \eqref{eq:lem3-1} first. Set $\Omega_N=\varnothing$, i.e.~we stop necessarily at $N$. Then we note that
$$
T=\sum_{t=0}^{N-1}\ind{T>t}=(1+\cdots(1+\ind{S_{N-1}\in\Omega_{N-2}}(1+\ind{S_{N-1}\in\Omega_{N-1}}(1+\ind{S_{N}\in\Omega_{N}})))\cdots)
$$
suggesting that
$$
\Exp[T|\theta]=\Exp[(1+\cdots\Exp[(1+\ind{S_{N-1}\in\Omega_{N-1}}\Exp[(1+\ind{S_N\in\Omega_N})|\F_{N-1},\theta)|\F_{N-2},\theta]\cdots)|\theta].
$$
If we set $U_N(S_N)=0$ then we can define the backward recursion
\begin{multline*}
U_t(S_t)=\Exp[1+\ind{S_{t+1}\in\Omega_{t+1}}U_{t+1}(S_{t+1})|\F_t]=1+\Exp[\ind{S_{t+1}\in\Omega_{t+1}}U_{t+1}(S_{t+1})|\F_t]\\
=1+\Pro(X_{t+1}=1|S_t,\theta)\ind{S_{t}+1\in\Omega_{t+1}}U_{t+1}(S_t+1)+\Pro(X_{t+1}=0|S_t,\theta)\ind{S_{t}\in\Omega_{t+1}}U_{t+1}(S_t)\\
=1+\theta\ind{S_{t}+1\in\Omega_{t+1}}U_{t+1}(S_t+1)+(1-\theta)\ind{S_{t}\in\Omega_{t+1}}U_{t+1}(S_t),
\end{multline*}
which proves \eqref{eq:lem3-1} and, also, that $U_0(S_0)=\Exp[T|\theta]$. For \eqref{eq:lem3-2} we proceed similarly the only difference being that 
$\Pro(X_{t+1}=1|S_t)=g_{t+1}(S_t)$ with this probability being defined in \eqref{eq:gt}.

For \eqref{eq:lem3-3} and \eqref{eq:lem3-4} we follow similar steps. We have
\begin{multline*}
\ind{|\hat{\theta}_T-\theta|>h}=\sum_{t=0}^N\ind{|\hat{\theta}_t-\theta|> h}\ind{T=t}=\sum_{t=0}^N\ind{|\hat{\theta}_t-\theta|> h}\ind{S_t\not\in\Omega_t}\prod_{j=0}^{t-1}\ind{S_j\in\Omega_j}\\
=(\ind{|\hth_0-\theta|>h}\ind{S_0\not\in\Omega_0})+(\ind{|\hth_1-\theta|>h}\ind{S_1\not\in\Omega_1})\ind{S_0\in\Omega_0}+\cdots+(\ind{|\hth_N-\theta|>h}\ind{S_N\not\in\Omega_N})\prod_{j=1}^{N-1}\ind{S_j\in\Omega_j}.
\end{multline*}
Applying expectation given $\theta$ yields
\begin{multline*}
\Pro(|\hat{\theta}_T-\theta|>h|\theta)=\Exp[\ind{|\hth_0-\theta|}\ind{S_0\not\in\Omega_0}+\cdots\\
+\Exp[\ind{|\hth_{N-1}-\theta|>h}\ind{S_{N-1}\not\in\Omega_N}+(\Exp[\ind{|\hth_N-\theta|>h}\ind{S_N\not\in\Omega_N}|\F_{N-1},\theta])\ind{S_{N-1}\in\Omega_{N-1}}|\F_{N-2},\theta])\cdots|\theta].
\end{multline*}
Defining $W_N(S_N)=\ind{|\hth_N-\theta|>h}$ it is straightforward to see that the recursion in 
\eqref{eq:lem3-3} computes the desired complementary coverage probability. Similarly for \eqref{eq:lem3-4} only now instead of conditioning with respect to both $\F_t$ and $\theta$ we condition only with respect to $\F_t$. This concludes the proof.\qed

\vskip0.2cm
\noindent\textit{Proof of Theorem\,\ref{th:2}:} Let us first find upper and lower bounds of $\C_t(S_t)$ that are independent from $S_t$. From \cite[Theorem\,2.1]{Marchal} and for a random variable $X$ with density $\text{Beta}(x,p,q)$ we have that
\begin{equation}
\Exp[e^{\lambda(X-\mu)}]\leq e^{\frac{\lambda^2}{8(p+q+1)}},~\lambda>0,
\end{equation}
where $\mu=\frac{p}{p+q}$ is the average under the Beta density. Using the Markov inequality we can then write
\begin{multline}
\Pro(|X-\mu|>h)=\Pro(X-\mu>h)+\Pro(X-\mu<-h)=\Pro(X-\mu>h)+\Pro(1-X-(1-\mu)>h)\\
\leq \frac{\Exp[e^{\lambda(X-\mu)}]}{e^{\lambda h}}+\frac{\Exp[e^{\lambda(1-X-(1-\mu))}]}{e^{\lambda h}}\leq 2e^{\frac{\lambda^2}{8(p+q+1)}-\lambda h},
\label{eq:A3-1}
\end{multline}
where we used the fact that if $X$ is Beta distributed with parameters $p,q$ then $1-X$ is also Beta with parameters $q,p$. Selecting in \eqref{eq:A3-1} $\lambda=4(p+q+1)h$ yields the tightest upper bound, namely
\begin{equation}
\Pro(|X-\mu|>h)\leq 2e^{-2h^2(p+q+1)}.
\label{eq:A4}
\end{equation}
We can now use this result to upper bound $\C_t(S_t)$. We observe that
\begin{equation}
\C_t(S_t)=\inf_{\hat{\theta}_t}\Pro(|\theta-\hat{\theta}_t|>h|\F_t)\leq\Pro\big(|\theta-\Exp[\theta|\F_t]|>h|\F_t\big)\leq 2e^{-2h^2(t+2a+1)}.
\label{eq:A5}
\end{equation}
For for the last inequality we used \eqref{eq:A4} and the fact that $\theta$ given $\F_t$ is Beta distributed with parameters $p=S_t+a$ and $q=t-S_t+a$.

Let us now find a lower bound for $\C_t(S_t)$. From \cite[Page 944, Formula 26.5.15]{Abramowitz} we conclude that $I_x(p,q)>I_x(p+1,q-1)$ for $q>1$. Using this inequality repeatedly in \eqref{eq:25} we conclude
\begin{multline}
\Pro(|\hat{\theta}_t-\theta|>h|\F_t)=1-I_{\min\{1,\hth_t+h\}}(S_t+a,t-S_t+a)+I_{\max(0,\hth_t-h)}(S_t+a,t-S_t+a)\\
=I_{\max\{0,1-h-\hth_t\}}(t-S_t+a,S_t+a)+I_{\max(0,\hth_t-h)}(S_t+a,t-S_t+a)\\
\geq I_{\max\{0,1-h-\hth_t\}}(t+2n_a+\delta_a,\delta_a)+I_{\max\{0,\hth_t-h\}}(t+2n_a+\delta_a,\delta_a),
\label{eq:A40}
\end{multline}
where for the second equality we used the property $1-I_x(p,q)=I_{1-x}(q,p)$ and where $n_a,\delta_a$ are defined as
$$
n_a=\left\{\begin{array}{cl}[a]&\text{if}~a~\text{not an integer}\\a-1&\text{if}~a~\text{an integer},\end{array}\right.~~
\delta_a=\left\{\begin{array}{cl}a-[a]&\text{if}~a~\text{not an integer}\\1&\text{if}~a~\text{an integer},\end{array}\right.
$$
where $[a]$ denotes integer part of $a$.
Since $a>0$ we have $n_a\geq0$, $1\geq\delta_a>0$ and $a=n_a+\delta_a$.
By taking the derivative of the last sum in \eqref{eq:A40} with respect to $\hth_t$ we can show that it has the same sign as the following expression
$$
\phi(\hth_t)=\frac{(\hth_t-h)^{t+2n_a+\delta_a-1}}{(1+h-\hth_t)^{1-\delta_a}}-\frac{(1-h-\hth_t)^{t+2n_a+\delta_a-1}}{(\hth_t+h)^{1-\delta_a}}.
$$
Now it is easy to verify that $\phi(1-\hth_t)=-\phi(\hth_t)$ therefore it is sufficient to analyze the sign of $\phi(\hth_t)$ for $h\leq\hth_t\leq0.5$. When $t\geq1$ and because $1\geq\delta_a$ we can see that the sign is negative for any value of $a$, suggesting that we have a minimum for $\hth_t=0.5$. Therefore, if $\Gamma(x)$ denotes the Gamma function, then for $t\geq1$ we can write
\begin{multline}
\C_t\geq 2I_{0.5-h}(t+2n_a+\delta_a,\delta_a)\geq 2\frac{\Gamma(t+2n_a+2\delta_a)(0.25-h^2)^{\delta_a}}{\Gamma(t+2n_a+\delta_a+1)\Gamma(\delta_a)}(0.5-h)^{t+2n_a}\\
= 2\frac{\Gamma(t+2n_a+2\delta_a+1)(0.25-h^2)^{\delta_a}}{(t+2n_a+2\delta_a)\Gamma(t+2n_a+\delta_a+1)\Gamma(\delta_a)}(0.5-h)^{t+2n_a}\\
\geq2\frac{(0.25-h^2)^{\delta_a}}{(t+2n_a+2\delta_a)\Gamma(\delta_a)}(0.5-h)^{t+2n_a}.
\label{eq:A30}
\end{multline}
In the previous expression the second inequality comes from \cite[Page 944, Formula 26.5.16]{Abramowitz}; for the next equality we used the property $\Gamma(x+1)=x\Gamma(x)$; while for the last inequality we used the  increase of $\Gamma(x)$ for $x\geq1.5$, which is true in our case for $t\geq1$ and any $a>0$.

Having established bounds for $\C_t$ we can now compute an upper bound $N$ for $t_{\rm up}$ and a lower bound $\nu$ for $t_{\rm lo}$ therefore proving their existence and demonstrating properties i) and ii). We first note that if $\C_N\leq c$ in \eqref{eq:backwards2} we will have $\C_N\leq c+\tilde{\V}_N$ meaning that $\V_N=\C_N$ and consequently $N$ is a stopping instant for \textit{all} values of $S_t$. This implies that $\T\leq N$. Quantity $t_{\rm up}$ is the smallest $N$ for which this inequality is true for all $S_t$. Requiring $2e^{-2h^2(N+2a+1)}\leq c$ we obtain
\begin{equation}
N=\left\lceil\max\left\{0,\frac{|\log(c)|+\log(2)}{2h^2}-2a-1\right\}\right\rceil.
\label{eq:N}
\end{equation}

To find a lower bound $\nu$ for $t_{\rm lo}$ we combine the lower bound of $\C_t$ with an upper bound for $\V_t$. Finding the latter is straightforward. Indeed if we start from time instant $N$ which, as we argued, is selected so that $\C_N\leq c$, then using induction and the fact that
$$
\V_t=\min\{\C_t,c+\Exp[V_{t+1}|\F_t]\}\leq c+\Exp[\V_{t+1}|\F_t]
$$
we can show that $\V_t\leq c+c(N-t)=c(N+1-t)$. It is then clear that, as long as $c(N+1)\leq\C_0$, for any $t\geq 1$ for which we have
\begin{equation}
c(N+1-t)(t+2n_a+2\delta_a)\leq 2\frac{(0.25-h^2)^{\delta_a}}{\Gamma(\delta_a)}(0.5-h)^{t+2n_a}
\label{eq:A31}
\end{equation}
we do not stop at this time instant. In fact we can see that we have an interval of the form $t\in[0,\ldots,\nu]$ during which no stopping can occur. A rough estimate of $\nu$ can be obtained by solving instead of \eqref{eq:A31} the simpler alternative $\max_tc(N+1)(t+2n_a+2\delta_a)=\frac{c}{4}(N+2n_a+2\delta_a)^2\leq\frac{2(0.25-h^2)^{\delta_a}}{\Gamma(\delta_a)}(0.5-h)^{\nu+2n_a}$ which yields
\begin{equation}
%
\nu=\left\lfloor\max\left\{0,\frac{|\log(c)|-\log\left((N+2n_a+\delta_a)^2\Gamma(\delta_a)\right)+ \log\left(8(0.25-h^2)^{\delta_a}\right)}{|\log(0.5-h)|}-2n_a\right\}\right\rfloor,
\label{eq:nu}
\end{equation}
provided $c$ satisfies $c\leq\frac{\C_0}{N+1}$. Regarding the latter, if we are in the non-trivial case where we do not stop at time 0 then $\alpha<\C_0$, consequently it is sufficient to have $c\leq\frac{\alpha}{N+1}$. We thus conclude that for small enough $c$ there is a lower limit $t_{\rm lo}\geq \nu$ which is nontrivial. This concludes the proof.\qed

\ifCLASSOPTIONcaptionsoff
  \newpage
\fi

\bibliographystyle{IEEEtran}
\bibliography{ref}

\end{document}